%
%
%
%

\documentclass[runningheads,a4paper]{llncs}

\usepackage{amssymb}
\setcounter{tocdepth}{3}
\usepackage{graphicx}

\usepackage{amsmath}

\usepackage{url}
\urldef{\mailsa}\path|deng-xt@cs.sjtu.edu.cn|
\urldef{\mailsb}\path|zhe_feng@g.harvard.edu|
\urldef{\mailsc}\path|christos@cs.berkeley.edu|

\begin{document}

\mainmatter  

\title{Power-Law Distributions in a Two-sided Market\\
and Net Neutrality}

\titlerunning{Power-Law Distributions in a Two-sided Market and Net Neutrality}

%
%
\author{Xiaotie Deng \and Zhe Feng\and Christos H. Papadimitriou}
\authorrunning{X. Deng, Z. Feng, C. H. Papadimitriou}

\institute{Department of Computer Science, Shanghai Jiao Tong University, Shanghai, China\\
School of Engineering and Applied Sciences, Harvard University, Cambridge, USA\\
Department of Computer Science, University of California, Berkeley, CA, USA\\
\mailsa;
\mailsb;
\mailsc\\
}
%
%

\toctitle{Power-Law Distributions in a Two-sided Market}
\tocauthor{and Net Neutrality}
\maketitle

\begin{abstract}
``Net neutrality'' often refers to the policy dictating that an Internet service provider (ISP) cannot charge content providers (CPs) for delivering their content to consumers.  Many past quantitative models designed to determine whether net neutrality is a good idea have been rather equivocal in their conclusions.  Here we propose a very simple two-sided market model, in which the types of the consumers and the CPs are {\em power-law distributed} --- a kind of distribution known to often arise precisely in connection with Internet-related phenomena.  We derive mostly analytical, closed-form results for several regimes:  (a) Net neutrality, (b) social optimum, (c) maximum revenue by the ISP, or (d) maximum ISP revenue under quality differentiation.  One unexpected conclusion is that (a) and (b) will differ significantly, unless average CP productivity is very high.
\end{abstract}

\section{Introduction}
The Internet is by far the world's most crucial technological artifact.  A mere quarter century after its beginning, it has emerged to become, through connecting over two billion people, the nexus of all human activity --- intellectual, social, economic --- and to satisfy, to varying and rapidly evolving degrees, humanity's thirst for information and access, communication and interaction, education and wisdom, entertainment and excitement, opportunity and publicity, let alone justice, freedom, democracy.  The Internet is also a gestalt, complex system; a novel, mysterious, and fascinating scientific object studied intensely by researchers of all colors, including computer scientists and economists.

From the point of view of economics (that is to say, efficiency and scarcity) one useful abstraction of the Internet is that of a {\em two-sided market} \cite{Arm,Ti}.  In such a market, a platform (e.g., a game console, or an operating system,  or an Internet service provider (ISP)) brings together two populations of agents: players with game developers, or users with application programmers, or, in the case of ISPs, Internet users with Internet content providers (CPs, such as Google, NYT, or Shtetl-optimized).  Two-sided markets are interesting because they can exhibit network effects and other complex externalities.  An important question is, if the two populations are passive price-takers, what is the platform policy (typically, pricing for access from both sides\footnote{Possibly negative prices: recall that in the first two examples the practice includes subsidies.}) that maximizes platform revenue, and what is the socially optimum policy?  Plus, if these two differ substantially, how should the two-sided market be optimally regulated?

In the case of the Internet, this question has been known as the {\em net neutrality debate}, see \cite{wiki,open} and the related work section for the complex history and diverse and precarious current status.  The term ``net neutrality" has been used in many different senses.  Most fundamentally, and closest to home, net neutrality is the computer science argument that the {\em end-to-end} principle in networking \cite{endtoend} implies that ISPs have no access to the content or origin of packets (as such information adds nothing to the network's ability to operate properly).  In policy, law, and economics, by ``net neutrality" one typically understands two implied consequences of the end-to-end principle, namely that ISPs cannot/should not (a) treat flows differentially depending on the originating CP; or (b) charge CPs for resource use, or for content delivery to consumers.

There is a substantial and growing literature of economic research on net neutrality, and the two subtly different interpretations of the term ``net neutrality" (a) and (b) above give rise to divergent threads within it (see the Related Work section).  Typically the models include only one ISP (even though interesting analyses of multiple ISPs exist \cite{W}) who charges (or does not) the two sides of the market for access, while the utility of the two populations is modeled in a  number of different natural ways.  Unsurprisingly, there is no definitive answer in the literature to the key question above (``which ISP policy is socially optimal?"), even though interesting points can be made based on such models  (more in the related work section).

\paragraph{The model.}  In this paper we introduce and analyze a new model of two-sided market motivated by the net neutrality problem.  Our goals in defining this model have been these:

\begin{itemize}
\item {\em Keep the model very simple,} with {very few and crisp parameters and assumptions}, so that general conclusions can be drawn.

\item At the same time, adopt assumptions (e.g. about distributions) compatible with the acknowledged reality of the Internet.  Our model is the first to assume that the types of both users and CPs are {\em power-law distributed.}
\end{itemize}

Power law distributions \cite{FFF,M} (see also \cite{Gab} for their use in economic modeling) are simple distributions outside the exponential family, with one parameter (the exponent) typically ranging between 2 and 3 --- thus, they also serve our goal of parametric parsimony.  Even though they had been observed in many places since the early 20th century (in city populations, word frequencies, incomes, etc.\footnote{Power law distributions have been called ``the signature of human activity,'' even though they also appear in life and the cosmos, and they are easy to confuse with the lognormal distribution.}), it was the Internet that brought them to the center of technical discourse --- indeed it seems almost impossible to understand and model any aspect of the Internet and the web without resorting to these types of distributions. It seems natural to suppose that CP type (capturing the CP's quality, or market share, or size) is so distributed, since power-law distributed firm size is a known characteristic of dynamic industries.  It is also natural to accept that consumer types (measuring motivation, interest in the Internet) are power-law distributed --- for example, incomes are distributed this way.  Type distributions lie at the basis of our model.  The product of consumer type $x$ times CP type $y$, times the speed of the net, captures the {\em matching probability}, the probability that a consumer of type $x$ will ``like'' (download content from) a CP of type $y$.  This, together with a simple assumption on network speed (we take it inversely proportional to total traffic) defines the expected utility of both CPs and consumers:  For a CP we assume it is proportional to the number of consumers who like it (modeling advertising income, or else popularity) and for a consumer a concave function, such as the square root, of the number of CPs that s/he likes.  Finally, in the appendix we also briefly discuss a simple model of quality-of-service differentiation where an ISP charges CPs for using a privileged channel akin to the so-called ``Paris metro pricing'' \cite{Odl}.

Naturally, there are many aspects of this complex problem that we do not model:   We do not model ISP costs, and, most importantly for the net neutrality debate, ISP technology and investment.  However, our work can inform this crucial aspect of the problem, as our analytical results depend explicitly on the total network capacity.  Our model of CP cost is simplistic (we assume that it is, in expectation, proportional to its type), but we have obtained similar results under different assumptions.  We assume that there is only one ISP (as does most of the literature); however, our results can be used to solve simple models with many ISPs.  And we do not model one of the salient characteristics of the Internet, namely its rapid growth; however, our use of power-law distributions in CP size can be seen as taking into account the exquisitely dynamic nature of the Internet market.

\paragraph{Our results.}  {\em We derive closed-form analytical results} for almost all of the questions raised by our model: For the optimum ISP policy, for the optimum ISP policy under net neutrality, as well as for the ISP policy that maximizes social welfare, but also for comparisons between them; for a few points that are hard to answer analytically, we have very clean computational results.

Our most surprising conclusion is that, in this model, {\em net neutrality is not socially optimal} unless CP costs are very small.  That is, there is in general a socially optimum price the ISP should charge the CPs, and this price is zero only if a parameter measuring CP costs (essentially, the average inverse productivity in the CP industry) is below a threshold.     Regulation is needed for efficiency, requiring the ISP to charge CPs not necessarily zero, as in net neutrality, but the socially optimum price, typically smaller than what the ISP would like to charge.

The question then, for the regime of large CP costs, becomes:  among the two suboptimal extremes (net neutrality or ISP revenue maximization), which is the more efficient?  It turns out that the answer varies (see the computational results in Fig. \ref{fig:sw}):  For CP costs just above the neutrality threshold, net neutrality is better.  For larger CP costs, ISP revenue maximization is better.  Interestingly, in both cases the differences in social welfare between the three regimes does not seem that great.  Overall, our model yields concrete, quantitative, and crisp results for the net neutrality problem, stemming from rather involved analysis, of the kind we believe had not been available in the literature, for a kind of model (consumers and CPs of power-law distributed types) that is arguably especially fit for the problem in hand.

Our results are summarized in Table \ref{tab:main}. The parameters shown in this table will be mentioned in next section.

\begin{table}[h]
	\vspace{10pt}
	\renewcommand\arraystretch{1.7}
	\centering
	\caption{Summary of results}
	\begin{tabular}{c|c|c|c}
		
		\hline
		
		& CP costs ($a$) & Optimal CP fee ($b_{opt}$) & Optimal \\
		&&& membership fee ($c_{opt}$)   \\\hline
		Max-Rev  & $a>\frac{\lambda}{2}\frac{x_0^{2-\gamma}}{\gamma-2}$  & $a$ & 0 \\
		 $(c=0)$  & $0\leq a \leq \frac{\lambda}{2}\frac{x_0^{2-\gamma}}{\gamma-2}$ &$\frac{\lambda}{\gamma-2}x_0^{2-\gamma}-a$ & 0\\\hline
		Max-Rev & $a\leq \frac{1}{2}(\frac{\gamma-2}{\gamma-1}\phi'(\bar Yx_0)+\lambda)\bar X$ & $\frac{\lambda}{\gamma-2}x_0^{2-\gamma} -a$ & $\phi(\frac{1}{\beta-2}y_0^{2-\beta}x_0)$\\
		 $(c>0)$ &$a> \frac{1}{2}(\frac{\gamma-2}{\gamma-1}\phi'(\bar Yx_0)+\lambda)\bar X$ &$(\frac{2\lambda}{\frac{\gamma-1}{\gamma-2}\phi'(\sqrt{\bar Y\frac{1}{\beta-2}y^{* 2-\beta}}x_0)+\lambda}-1)a$ &  $\phi(\sqrt{\bar Y\frac{1}{\beta-2}y^{* 2-\beta}}x_0)$\\\hline
		Socially & $a\leq \frac{1}{2}(\int_{x_0}^{\infty}\phi'(x\bar Y)x^{1-\gamma}dx+\lambda \bar X)$ & $\leq \frac{\lambda}{\gamma-2}x_0^{2-\gamma} -a$ & $\leq \phi(\sqrt{\bar Y\frac{1}{\beta-2}\widehat{y}^{2-\beta}}x_0)$ \\
		Optimum  &$a\> \frac{1}{2}(\int_{x_0}^{\infty}\phi'(x\bar Y)x^{1-\gamma}dx+\lambda \bar X)$ & $(\frac{2\lambda}{\frac{1}{\bar X}\int_{x_0}^{\infty}\phi'(\sqrt{\bar Y\frac{1}{\beta-2}\widehat{y}^{2-\beta}}x)x^{1-\gamma}dx+\lambda}-1)a$ & $\leq \phi(\sqrt{\bar Y\frac{1}{\beta-2}\widehat{y}^{2-\beta}}x_0)$\\\hline
	\end{tabular}
	\label{tab:main}
\end{table}


\subsection*{Related Work}
For aspects of policy, law, and history of the subject see \cite{open,Peha,Tate,wiki}.  \cite{LW} is an eloquent advocacy of net neutrality backed by modest quantitative argument, while \cite{Alt} is an exploration of possible business models in the CP industry and the ways they affect the net neutrality issue;  the model involves only one CP.  \cite{EH} propose a sophisticated and realistic model of CP-consumer interaction, but the complexity of their model prevents definite conclusions about net neutrality; an important monotonicity principle is shown, stating that social welfare is always coterminous with the total content transmitted through the network.  In earlier work \cite{ET}, a simple model in a similar spirit to ours was proposed, albeit with CP and consumer types uniformly distributed.  Their results are dependent on parameter value ranges, with CP costs playing an important role, as they do in our results.  In the model of \cite{W} there are many regional monopolist ISPs, and deviation from net neutrality leads in a tragedy in the commons situation (the commons being the CP industry).  The effect and nature of competition among ISPs is taken on in \cite{K}, through a mostly qualitative analysis.

Net neutrality as differentiation in quality of service has also been addressed in the economic literature.  In \cite{Cheng,Choi} consumers are connected to {\em two} CPs through a single ISP running a network with realistic (i.e., informed by queueing theory) delays, and two levels of service (a fast lane sold through bidding in \cite{Choi}, a priority service in \cite{Cheng}), and the two CPs choose level of service according to their profitability.  In contrast, \cite{Kr} models CPs by their tolerance of network delays.  Finally, \cite{Nj} model the network as a sophisticated extensive-form game, in which CPs, ISPs, and consumers interact by setting prices and choosing services; they conclude that net neutrality prevails in several environments, for example in the presence of priority lanes.



\section{The Model}
In our basic model an ISP delivers the content of CPs to a population of consumers:
\begin{itemize}
  \item The consumers are modeled as a continuum of values for the {\em consumer type} $X$, intuitively, a measure of the value this particular consumer receives from browsing the Internet.  Importantly we assume that $X$ is power-law distributed, that is, the density function is $p_\gamma(x)=x^{-\gamma}$ for $x\geq x_0$, where $x_0=(\frac{1}{\gamma-1})^{\frac{1}{\gamma-1}}$ is the minimum type.   We denote the expectation of $X$ by $\bar X = \frac{1}{\gamma-2}(\frac{1}{\gamma-1})^{\frac{2-\gamma}{\gamma-1}}$.

\item Similarly, each CP has a type $Y$ with density function $p_\beta(y)=y^{-\beta}$ for all $y\geq y_0=(\frac{1}{\beta-1})^{\frac{1}{\beta-1}}$, a measure of the CPs quality, or size.  Again, ${\bar Y} = \frac{1}{\beta-2}(\frac{1}{\beta-1})^{\frac{2-\beta}{\beta-1}}$.

\item Bandwidth and speed:  The ISP provides bandwidth $B$ ($B$ is taken to be one for simplicity, even though our results can be rewritten as functions also of $B$, for the study of issues of investment and technology innovation by the ISP).
The speed of the network is then a decreasing function of the total traffic $T$, denoted $Sp(T)$, specified next.

\item Calculation of $T$.  Crucially, we assume that the infinitesimal contribution to traffic by consumers of  type\footnote{More formally, of types between $x$ and $x+dx$, etc.} $x$ and CPs of type $y$, or equivalently, the intensity with which a consumer of type $x$ will like and download the content of a CP of type $y$, is proportional to the product of the three magnitudes $x$, $y$, and $Sp(T)$ (times $dx\cdot dy$, of course).
Therefore, the total traffic is
  $$T=\int_{x_t}^{\infty}\int_{y_t}^{\infty}Sp(T)xy p_\gamma(x)p_\beta(y)dxdy$$
Here $x_t$ and $y_t$ are the key parameters sought by our analysis, namely the minimum types of consumers and CPs respectively that participate in the market (do not drop out), given the charges imposed by the ISP.
The maximum traffic $T_0$ occurs when $x_t=x_0$ and $y_t=y_0$.
We use the relative speed function $Sp(T)=\frac{T_0}{T}$. Thus,
$$T_0=\int_{x_0}^{\infty}\int_{y_0}^{\infty}xy p_\gamma(x)p_\beta(y)dxdy = \int_{x_0}^{\infty}x^{1-\gamma}dx\int_{y_0}^{\infty}y^{1-\beta}dy= \bar X\bar Y$$

Since $T$ only depends on $x_t, y_t$,
$$T = \sqrt{\int_{x_t}^{\infty}\int_{y_t}^{\infty}T_0 xy p_\gamma(x)p_\beta(y)dxdy} = \sqrt{\bar X\bar Y\int_{x_t}^{\infty}x^{1-\gamma}dx\int_{y_t}^{\infty}y^{1-\beta}dy }$$

\item Utility functions.
\begin{itemize}
\item The utility of a user of type $x$ is assumed to be $\phi(N_x)-c$, where $N_x$ is the expected number of content providers this user likes, $c$ is the membership fee imposed on users by the ISP (independent of the traffic), and $\phi(r)$ is a concave function such as $\sqrt{r}$.  Therefore, the utility function for a user of type $x$ is
\begin{equation*}
\phi(\int_{y_t}^{\infty}\frac{T_0}{T}xy^{1-\beta}dy)-c
\end{equation*}

\item Finally, we assume that the utility function of a content provider of type $y$ is $\lambda N_{y}-by-ay$ where
\begin{itemize}
\item $\lambda$ is a needed ``exchange rate'' between the utility of consumers and that of CPs;
\item $N_{y}$ is the expected number of consumers who like this content provider --- notice that we assume advertising income to be proportional to the number of users;
\item $ay$ is the expected costs of a content provider of type $y$;
\item $by$ is the payment that the content provider needs to pay to the platform. Notice here a simplifying modeling maneuver:  While we would like to make the CP's payment a linear function of the traffic originating from it, which is roughly $N_{y}$, we make it instead a linear function of its quality $y$, which is proportional to $N_{y}$.
\end{itemize}
Thus, the utility function of a content provider with quality $y$ is as follows:
\begin{equation*}
\lambda \int_{x_t}^{\infty}\frac{T_0}{T}x^{1-\gamma}ydx -by -ay
\end{equation*}

\end{itemize}
\item Revenue of the ISP, from charges imposed on consumers and CPs:
$${\cal R}=c\int_{x_t}^{\infty}x^{-\gamma}dx+b\int_{y_t}^{\infty}y\times y^{-\beta}dy=c\int_{x_t}^{\infty}x^{-\gamma}dx+b\int_{y_t}^{\infty}y^{1-\beta}dy,$$
\item Thus, the parameters of our model are these:  power-law exponents $\gamma$ and $\beta$; the consumer concave function $\phi$; and the CP utility parameters $a$ (expected cost per unit of size) and $\lambda$. The decision variables are $b$ and $c$ (the prices charged).
\end{itemize}

\section{Revenue Maximization}

We calculate the optimum prices for the ISP to charge the two sides of the market.  For technical reasons we start by finding the optimum $b$ (CP fee) when $c=0$ (this is Theorem 1), and then proceed to the general case (Theorem 2).  The proofs are in the Appendix \ref{appendix:a}.

\begin{theorem}\label{thm:c=0,a>0}
If $c=0$, the optimal pricing strategy is
$$
b_{opt}=\left\{
\begin{array}{ccc}
  a &  & a> \frac{\lambda}{2}\frac{x_0^{2-\gamma}}{\gamma-2}\\
   \frac{\lambda}{\gamma-2}x_0^{2-\gamma}-a& & 0\leq a \leq \frac{\lambda}{2}\frac{x_0^{2-\gamma}}{\gamma-2}
\end{array}\right.$$
\end{theorem}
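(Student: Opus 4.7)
The plan is to reduce the problem to a one-variable optimization over $b$ by first determining the equilibrium CP participation threshold $y_t(b)$. Since $c=0$, every consumer's utility $\phi(N_x)$ is non-negative, so $x_t=x_0$ and $\int_{x_0}^\infty x^{1-\gamma}dx=\bar X$; substituting into the traffic formula from Section~2 collapses $T_0/T$ to $\sqrt{\bar Y/J_\beta(y_t)}$, where $J_\beta(y_t):=\int_{y_t}^\infty y^{1-\beta}dy$. The CP utility then factors as $y\bigl(\lambda\tfrac{T_0}{T}\bar X-b-a\bigr)$, i.e., proportional to $y$ with a sign determined by the common bracket alone.

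Next I would pin down the equilibrium $y_t$ from this proportionality. Two consistent regimes arise: if $b+a\leq\lambda\bar X$, the bracket is already non-negative at $T=T_0$ and all CPs participate ($y_t=y_0$); otherwise the bracket must vanish self-consistently at the margin, $\lambda\tfrac{T_0}{T(y_t)}\bar X=b+a$, which after substitution gives $J_\beta(y_t)=\lambda^2\bar X^2\bar Y/(b+a)^2$, valid precisely when $b+a\geq\lambda\bar X$. Plugging into $\mathcal R(b)=b\,J_\beta(y_t)$ produces the piecewise revenue $\mathcal R(b)=b\bar Y$ for $b\in[0,\lambda\bar X-a]$ and $\mathcal R(b)=b\lambda^2\bar X^2\bar Y/(b+a)^2$ for $b\geq\lambda\bar X-a$.

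A short calculus argument then finishes the proof. Region~1 is linearly increasing, with supremum $(\lambda\bar X-a)\bar Y$ attained at the right endpoint $b=\lambda\bar X-a$. In Region~2, differentiation yields $\mathcal R'(b)\propto(a-b)/(b+a)^3$, so the unique interior maximizer is $b=a$, with value $\lambda^2\bar X^2\bar Y/(4a)$; this lies inside Region~2 exactly when $a\geq\lambda\bar X/2$. The identity $\lambda^2\bar X^2-4a(\lambda\bar X-a)=(\lambda\bar X-2a)^2\geq0$ shows that whenever $b=a$ is feasible in Region~2 it strictly dominates the Region~1 candidate (with equality at $a=\lambda\bar X/2$). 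Rewriting $\bar X=x_0^{2-\gamma}/(\gamma-2)$ produces exactly the two branches in the statement.

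The main obstacle I foresee is the equilibrium step: because the CP utility is strictly linear in $y$, the indifference condition at the margin is degenerate (the bracket either has the same sign for every $y$ or is identically zero), and one must justify that the correct $y_t$ is the self-consistent fixed point in which the bracket vanishes and the large-$y$ CPs are those who remain active. Once this modeling subtlety is settled, the remainder is elementary single-variable calculus.
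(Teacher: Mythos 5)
Your proof is correct and follows essentially the same route as the paper's: the paper parametrizes by the cutoff $y_t$ with $b$ pinned down by the zero-utility threshold condition and optimizes $R(y_t)=\lambda\sqrt{T_0\bar X J_\beta(y_t)}-aJ_\beta(y_t)$ by a monotonicity/first-order argument, which is exactly your optimization in the dual variable $b$. The degeneracy you flag (CP utility linear in $y$, so the marginal-type indifference condition is type-independent) is real but is present in, and glossed over by, the paper's proof as well, so your treatment is if anything slightly more careful.
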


\begin{theorem}\label{thm:c>0, a>0}
  If $c>0, a\geq0$ and $\phi(\cdot)$ is a positive increasing concave function,  the optimal pricing strategy is

  \begin{itemize}
    \item If $a\leq \frac{1}{2}(\frac{\gamma-2}{\gamma-1}\phi'(\bar Yx_0)+\lambda)\bar X$,
          \begin{equation*}
        \left\{
        \begin{array}{c}
          b_{opt} = \frac{\lambda}{\gamma-2}x_0^{2-\gamma} -a \\
          c_{opt} = \phi(\frac{1}{\beta-2}y_0^{2-\beta}x_0)
        \end{array}\right.
      \end{equation*}
    \item If $a> \frac{1}{2}(\frac{\gamma-2}{\gamma-1}\phi'(\bar Yx_0)+\lambda)\bar X$,

    \begin{equation*}
        \left\{
        \begin{array}{c}
          b_{opt} = (\frac{2\lambda}{\frac{\gamma-1}{\gamma-2}\phi'(\sqrt{\bar Y\frac{1}{\beta-2}y^{* 2-\beta}}x_0)+\lambda}-1)a \\
          c_{opt} = \phi(\sqrt{\bar Y\frac{1}{\beta-2}y^{* 2-\beta}}x_0)
        \end{array}\right.
    \end{equation*}
    where $y^*$ is the solution of $y_t$ which satisfies the following equation:
    $$a-\frac{1}{2}(\frac{\gamma-2}{\gamma-1}\phi'(\frac{\sqrt{T_0}\sqrt{\int_{y_t}^{\infty}y^{1-\beta}dy}}{\sqrt{\int_{x_0}^{\infty}x^{1-\gamma}dx}}x_0)+\lambda)\frac{\sqrt{T_0}\sqrt{\int_{x_0}^{\infty}x^{1-\gamma}dx}}{\sqrt{\int_{y_t}^{\infty}y^{1-\beta}dy}}=0$$
  \end{itemize}

\end{theorem}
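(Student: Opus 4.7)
The plan is to reparameterize the ISP's optimization from the prices $(b,c)$ to the induced participation thresholds $(x_t,y_t)\in[x_0,\infty)\times[y_0,\infty)$. The two indifference conditions at the margin pin these down: a type-$x_t$ consumer breaks even giving $c=\phi(u)$, and a type-$y_t$ CP breaks even per unit of size giving $b+a=\lambda\sqrt{T_0 A/B}$, where I abbreviate $A=\int_{x_t}^\infty x^{1-\gamma}dx$, $B=\int_{y_t}^\infty y^{1-\beta}dy$, $T=\sqrt{T_0 AB}$, and $u=x_t\sqrt{T_0 B/A}$. Substituting into the ISP's revenue collapses it to the compact form
\begin{equation*}
\mathcal{R}(x_t,y_t)\;=\;\phi(u)\,\frac{x_t^{1-\gamma}}{\gamma-1}\;+\;\lambda T\;-\;aB,
\end{equation*}
to be maximized over the two-dimensional region $[x_0,\infty)\times[y_0,\infty)$.

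First I would show that the maximum lies on the edge $x_t=x_0$, that is, every consumer participates. Logarithmic differentiation yields $\partial_{x_t}u=u\gamma/(2x_t)$ and $\partial_{x_t}T=-T(\gamma-2)/(2x_t)$, whence a short computation gives
\begin{equation*}
\partial_{x_t}\mathcal{R}\;=\;x_t^{-\gamma}\Bigl[\frac{\gamma\,u\,\phi'(u)}{2(\gamma-1)}-\phi(u)-\frac{\lambda u}{2}\Bigr].
\end{equation*}
Concavity of $\phi$ with $\phi(0)\ge 0$ (which follows from positivity of $\phi$) gives the tangent inequality $\phi(u)\ge u\phi'(u)$, and since $\gamma>2$ implies $\gamma/(2(\gamma-1))<1$, the bracket is non-positive throughout the domain. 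Hence $\mathcal{R}$ is non-increasing in $x_t$ and the optimum has $x_t=x_0$. Extracting this clean sign from the interplay of the concavity of $\phi$ with the power-law exponent $\gamma>2$ is the conceptual heart of the argument and the step I expect to be the main obstacle.

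With $x_t=x_0$ fixed, I would introduce $s=\sqrt{\bar Y\,B(y_t)}$, so that $u=x_0 s$ and $T=\bar X s$, and $\mathcal{R}$ becomes a one-variable function of $s\in(0,\bar Y]$. The first-order condition $\partial_s\mathcal{R}=0$ reduces after algebra to
\begin{equation*}
a\;=\;\frac{T_0}{2s}\Bigl[\frac{\gamma-2}{\gamma-1}\phi'(x_0 s)+\lambda\Bigr].
\end{equation*}
Evaluating at $s=\bar Y$ (i.e.\ $y_t=y_0$, using $T_0/\bar Y=\bar X$) recovers precisely the threshold $\tfrac{1}{2}(\tfrac{\gamma-2}{\gamma-1}\phi'(\bar Y x_0)+\lambda)\bar X$ stated in the theorem, which coincides with the equation defining $y^*$ at that boundary. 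Because $\phi'$ is non-increasing and $1/s$ is decreasing, the right-hand side is monotone in $s$, so $\partial_s\mathcal{R}$ is monotone too, giving a clean case split. If $a$ sits at or below the threshold then $\partial_s\mathcal{R}\ge 0$ on all of $(0,\bar Y]$, so the optimum is $s=\bar Y$ (equivalently $y_t=y_0$), and plugging into the indifference conditions yields $b_{opt}=\lambda\bar X-a$ and $c_{opt}=\phi(\bar Y x_0)$, matching case~1. Otherwise there is a unique interior root $s^*\in(0,\bar Y)$, the same monotonicity supplying both uniqueness and the second-order check; substituting $s^*=\sqrt{\bar Y\,B(y^*)}$ back into $c=\phi(u)$ and $b+a=\lambda T_0/s^*$ produces the case-2 formulas. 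The behaviour $B\to 0$ as $y_t\to\infty$ rules out optima at infinity.
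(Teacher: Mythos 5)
Your proof is correct and follows essentially the same route as the paper's: reduce to threshold (indifference) prices, show $\partial \mathcal{R}/\partial x_t \le 0$ via the tangent inequality $u\phi'(u)\le\phi(u)$ combined with $\gamma/(2(\gamma-1))<1$, and then analyze the monotone first-order condition in $y_t$ (your substitution $s=\sqrt{\bar Y\int_{y_t}^{\infty}y^{1-\beta}dy}$ is only a cosmetic cleanup of the paper's increasing function $g(y_t)$). One remark: your case-2 expression for $b_{opt}$ carries the factor $\frac{\gamma-2}{\gamma-1}$, which agrees with the paper's own proof and with the displayed equation defining $y^*$; the $\frac{\gamma-1}{\gamma-2}$ appearing in the theorem's statement is a typo, not an error in your derivation.
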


\section{Socially optimum pricing}
While excluding consumers is obviously inefficient, rather surprisingly including all CPs may not be socially optimal.  The intuitive reason is that low quality CPs clutter the Internet and incur large costs without adding enough value.  Again we must determine the optimal $x_t$ and $y_t$, and the corresponding $b$ and $c$.  Let ${\cal S}$ denote the social welfare.  We have:

\begin{equation}\label{eq:social_welfare}
  {\cal S}= \int_{x_t}^{\infty}v(x)x^{-\gamma}dx+\int_{y_t}^{\infty}v(y)y^{-\beta}dy-a\int_{y_t}^{\infty}y^{1-\beta}dy
\end{equation}

where \begin{equation}\begin{aligned}
      v(x)&=\phi(\int_{y_t}^{\infty}\frac{T_0}{T}y^{1-\beta}x dy)=\phi(\frac{\sqrt{T_0\int_{y_t}^{\infty}y^{1-\beta}dy}}{\sqrt{\int_{x_t}^{\infty}x^{1-\gamma}dx}}x)
      \end{aligned}
      \end{equation}

and \begin{equation}\begin{aligned}
    v(y) & = \lambda\int_{x_t}^{\infty}\frac{T_0}{T}x^{1-\gamma}y dx = \lambda \frac{\sqrt{T_0\int_{x_t}^{\infty}x^{1-\gamma}dx}}{\sqrt{\int_{y_t}^{\infty}y^{1-\beta}dy}}y
    \end{aligned}
    \end{equation}

Plugging these two equation above into Eq.~\ref{eq:social_welfare}, we get

\begin{equation*}
  \begin{aligned}
  {\cal S}=\int_{x_t}^{\infty}\phi(\frac{\sqrt{T_0\int_{y_t}^{\infty}y^{1-\beta}dy}}{\sqrt{\int_{x_t}^{\infty}x^{1-\gamma}dx}}x)x^{-\gamma}dx+
  \lambda\sqrt{T_0}\sqrt{\int_{x_t}^{\infty}x^{1-\gamma}dx}\sqrt{\int_{y_t}^{\infty}y^{1-\beta}dy}-a\int_{y_t}^{\infty}y^{1-\beta}dy
  \end{aligned}
\end{equation*}

We can prove the following.
\begin{theorem}\label{thm:sw1}
  To maximize the social welfare, the optimal $x_t=x_0$, while the optimal $y_t$ satisfies the following
$$y_t=\left\{
\begin{array}{ccc}
y_0 & & if a\leq \frac{1}{2}(\int_{x_0}^{\infty}\phi'(x\bar Y)x^{1-\gamma}dx+\lambda \bar X)\\
\widehat{y} & & if a> \frac{1}{2}(\int_{x_0}^{\infty}\phi'(x\bar Y)x^{1-\gamma}dx+\lambda \bar X)
\end{array}\right.
$$

where $\widehat{y}$ is the solution of the following equation of $y_t$:
$$a-\frac{1}{2}(\int_{x_0}^{\infty}\phi'(x\sqrt{\bar Y\int_{y_t}^{\infty}y^{1-\beta}dy} )x^{1-\gamma}dx+\lambda \bar X)\frac{\sqrt{\bar Y}}{\sqrt{\int_{y_t}^{\infty}y^{1-\beta}dy}})=0$$

In terms of the pricing strategy,

  \begin{itemize}
    \item If $a\leq \frac{1}{2}(\int_{x_0}^{\infty}\phi'(x\bar Y)x^{1-\gamma}dx+\lambda \bar X)$,
    \begin{equation*}
        \left\{
        \begin{array}{c}
          \widehat{b}_{opt} \leq \frac{\lambda}{\gamma-2}x_0^{2-\gamma} -a \\
          \widehat{c}_{opt} \leq \phi(\frac{1}{\beta-2}y_0^{2-\beta}x_0)
        \end{array}\right.
      \end{equation*}
    \item If $a > \frac{1}{2}(\int_{x_0}^{\infty}\phi'(x\bar Y)x^{1-\gamma}dx+\lambda \bar X)$

    \begin{equation*}
        \left\{
        \begin{array}{c}
          \widehat{b}_{opt} = (\frac{2\lambda}{\frac{1}{\bar X}\int_{x_0}^{\infty}\phi'(\sqrt{\bar Y\frac{1}{\beta-2}\widehat{y}^{2-\beta}}x)x^{1-\gamma}dx+\lambda}-1)a \\
          \widehat{c}_{opt} \leq \phi(\sqrt{\bar Y\frac{1}{\beta-2}\widehat{y}^{2-\beta}}x_0)
        \end{array}\right.
      \end{equation*}
  \end{itemize}

\end{theorem}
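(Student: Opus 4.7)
The plan is to first optimize welfare over the participation thresholds $(x_t,y_t)$ and then translate the optimizers into prices via the marginal participation conditions. Writing $A(x_t)=\int_{x_t}^{\infty}x^{1-\gamma}dx$, $I(y_t)=\int_{y_t}^{\infty}y^{1-\beta}dy$, and $s=\sqrt{T_0 I/A}$, the welfare takes the compact form
$$
{\cal S}=\int_{x_t}^{\infty}\phi(sx)\,x^{-\gamma}dx+\lambda\sqrt{T_0\,A\,I}-a\,I,
$$
which I differentiate separately in $x_t$ and in $y_t$. Leibniz differentiation gives
$$
\partial_{x_t}{\cal S}=x_t^{-\gamma}\Bigl[-\phi(sx_t)+\tfrac{sx_t}{2A}\int_{x_t}^{\infty}\phi'(sx)\,x^{1-\gamma}dx-\tfrac{\lambda s x_t\sqrt{\bar X}}{2}\Bigr].
$$
Since $\phi$ is concave, $\phi'$ is non-increasing on $[x_t,\infty)$, so the integral is bounded above by $\phi'(sx_t)\,A$, which reduces the nonpositivity of the derivative to the pointwise inequality $\phi(u)\ge\tfrac12 u\phi'(u)$ at $u=sx_t$. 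This follows from positivity of $\phi$ and concavity at the origin, $0\le\phi(0)\le\phi(u)-u\phi'(u)$, which in fact gives the stronger bound $\phi(u)\ge u\phi'(u)$. Hence ${\cal S}$ is non-increasing in $x_t$, and the optimum is $x_t^{*}=x_0$.

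Substituting $x_t=x_0$ so that $A=\bar X$ and $s=\sqrt{\bar Y I(y_t)}$, and using $I'(y_t)=-y_t^{1-\beta}$, a direct computation yields
$$
\partial_{y_t}{\cal S}=y_t^{1-\beta}\left[a-\tfrac12\sqrt{\bar Y/I(y_t)}\left(\int_{x_0}^{\infty}\phi'(\sqrt{\bar Y I(y_t)}\,x)\,x^{1-\gamma}dx+\lambda\bar X\right)\right].
$$
Denote the subtracted quantity by $H(y_t)$. As $y_t$ increases, $I(y_t)$ decreases, so $\sqrt{\bar Y/I(y_t)}$ increases; since $\phi'$ is non-increasing, the integral also increases. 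Thus $H$ is strictly increasing on $[y_0,\infty)$, with $H(y_0)=\frac{1}{2}(\int_{x_0}^{\infty}\phi'(x\bar Y)x^{1-\gamma}dx+\lambda\bar X)$ matching the threshold in the theorem. If $a\le H(y_0)$ the derivative is nonpositive throughout and $y_t^{*}=y_0$; otherwise the equation $H(y_t)=a$ has a unique root $\widehat y>y_0$, and the sign change from positive to negative identifies it as the maximizer, reproducing the stated FOC.

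Pricing is then read off from the individual-rationality conditions. The CP utility is linear in $y$ with per-unit coefficient $\lambda T_0\bar X/T-a-b$, independent of $y$. A strictly positive (negative) coefficient yields full (empty) CP participation; a zero coefficient makes every CP indifferent, so any threshold $y_t$ is consistent with the traffic equation. In the interior regime $y_t^{*}=\widehat y$, the planner zeros this coefficient, yielding $b=\lambda\bar X\sqrt{\bar Y/I(\widehat y)}-a$; eliminating $\sqrt{\bar Y/I(\widehat y)}$ via the Step-2 FOC produces the closed form $\widehat{b}_{opt}$ of the theorem. In the corner regime $y_t^{*}=y_0$ any $b\le\frac{\lambda}{\gamma-2}x_0^{2-\gamma}-a$ keeps every CP in. Because $x_t^{*}=x_0$, the consumer fee is constrained only by nonnegative utility of the marginal consumer at $x_0$, giving $\widehat{c}_{opt}\le\phi(\sqrt{\bar Y I(y_t^{*})}\,x_0)$, which specializes to the two stated inequalities.

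The delicate step is the first one: \emph{a priori} it is not obvious that excluding low-type consumers cannot help welfare, since doing so also speeds up the network for those who remain. The cancellation relies essentially on combining the concavity-plus-positivity bound $\phi(u)\ge u\phi'(u)$ with the monotonicity of $\phi'$; dropping either hypothesis breaks the argument. A secondary subtlety arises in the pricing translation: in the interior regime every CP type is simultaneously indifferent, so the planner must coordinate on the equilibrium in which precisely types $\ge\widehat y$ participate, rather than reading $\widehat y$ off a strict marginal-utility inequality.
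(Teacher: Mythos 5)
Your proof follows essentially the same route as the paper's: show $\partial{\cal S}/\partial x_t\le 0$ by combining the monotonicity of $\phi'$ with the concavity-plus-positivity bound $\phi(u)\ge u\phi'(u)$, then reduce the $y_t$-optimization to the sign of $a-H(y_t)$ with $H$ increasing, and read off prices from the marginal participation conditions (your explicit treatment of the simultaneous CP indifference and the substitution of the FOC into $b=\lambda\bar X\sqrt{\bar Y/I(\widehat y)}-a$ is a slightly more careful rendering of what the paper leaves implicit). The only blemish is a stray factor $\sqrt{\bar X}$ in the last term of your expression for $\partial_{x_t}{\cal S}$ (it should be $-\lambda s x_t/2$), which is harmless since that term is negative either way.
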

\begin{proof}
  Firstly, we consider the optimal $x_t$ to maximize social welfare of the platform. As we know, ${\cal S}$ is a function of $x_t$ and $y_t$, which is denoted by $S(x_t, y_t)$.
  \begin{equation}\label{eq:social_welfare_partial_x}
  \begin{aligned}
  \frac{\partial S(x_t, y_t)}{\partial x_t}&=
  \frac{1}{2}\frac{x_t^{1-\gamma}}{\int_{x_t}^{\infty}x^{1-\gamma}dx}\int_{x_t}^{\infty}\phi'(\frac{\sqrt{T_0\int_{y_t}^{\infty}y^{1-\beta}dy}}{\sqrt{\int_{x_t}^{\infty}x^{1-\gamma}dx}}x)\frac{\sqrt{T_0\int_{y_t}^{\infty}y^{1-\beta}dy}}{\sqrt{\int_{x_t}^{\infty}x^{1-\gamma}dx}}x^{1-\gamma}dx\\
  &-\phi(\frac{\sqrt{T_0\int_{y_t}^{\infty}y^{1-\beta}dy}}{\int_{x_t}^{\infty}x^{1-\gamma}dx}x_t)x_t^{-\gamma}-\frac{1}{2}\frac{\sqrt{T_0\int_{y_t}^{\infty}y^{1-\beta}dy}}{\sqrt{\int_{x_t}^{\infty}x^{1-\gamma}dx}}x_t^{1-\gamma}\\
  &\leq \frac{1}{2}\frac{x_t^{-\gamma}}{\int_{x_t}^{\infty}x^{1-\gamma}dx}\int_{x_t}^{\infty}\phi'(\frac{\sqrt{T_0\int_{y_t}^{\infty}y^{1-\beta}dy}}{\sqrt{\int_{x_t}^{\infty}x^{1-\gamma}dx}}x_t)\frac{\sqrt{T_0\int_{y_t}^{\infty}y^{1-\beta}dy}}{\sqrt{\int_{x_t}^{\infty}x^{1-\gamma}dx}}x_t x^{1-\gamma}dx\\
  &-\phi(\frac{\sqrt{T_0\int_{y_t}^{\infty}y^{1-\beta}dy}}{\int_{x_t}^{\infty}x^{1-\gamma}dx}x_t)x_t^{-\gamma}-\frac{1}{2}\frac{\sqrt{T_0\int_{y_t}^{\infty}y^{1-\beta}dy}}{\sqrt{\int_{x_t}^{\infty}x^{1-\gamma}dx}}x_t^{1-\gamma}\\
  &\leq \frac{1}{2}\frac{x_t^{-\gamma}}{\int_{x_t}^{\infty}x^{1-\gamma}dx}\phi(\frac{\sqrt{T_0\int_{y_t}^{\infty}y^{1-\beta}dy}}{\int_{x_t}^{\infty}x^{1-\gamma}dx}x_t)\int_{x_t}^{\infty}x^{1-\gamma}dx-\phi(\frac{\sqrt{T_0\int_{y_t}^{\infty}y^{1-\beta}dy}}{\int_{x_t}^{\infty}x^{1-\gamma}dx}x_t)x_t^{-\gamma}\\
  &-\frac{1}{2}\frac{\sqrt{T_0\int_{y_t}^{\infty}y^{1-\beta}dy}}{\sqrt{\int_{x_t}^{\infty}x^{1-\gamma}dx}}x_t^{1-\gamma}= -\frac{1}{2}\phi(\frac{\sqrt{T_0\int_{y_t}^{\infty}y^{1-\beta}dy}}{\int_{x_t}^{\infty}x^{1-\gamma}dx}x_t)x_t^{-\gamma}-\frac{1}{2}\frac{\sqrt{T_0\int_{y_t}^{\infty}y^{1-\beta}dy}}{\sqrt{\int_{x_t}^{\infty}x^{1-\gamma}dx}}x_t^{1-\gamma}
  \\
  &\leq 0
  \end{aligned}
  \end{equation}

The first inequality in above proof is based on the fact that $\phi'$ is a decreasing function of $x_t$. The second inequality is because $\forall x\geq 0, x\phi'(x)\leq\phi(x)$. Therefore, the optimal $x_t$ is $x_0$.

Next, we consider $y_t$.
  \begin{equation}\label{eq:social_welfare_partial_y}
  \begin{aligned}
  \frac{\partial S(x_t, y_t)}{\partial y_t}&=ay_t^{1-\beta}-\frac{1}{2}\int_{x_t}^{\infty}\phi'(\frac{\sqrt{T_0\int_{y_t}^{\infty}y^{1-\beta}dy}}{\sqrt{\int_{x_t}^{\infty}x^{1-\gamma}dx}}x)
  \frac{\sqrt{T_0}y_t^{1-\beta}}{\sqrt{\int_{x_t}^{\infty}x^{1-\gamma}dx\int_{y_t}^{\infty}y^{1-\beta}dy}}x^{1-\gamma}dx\\
  &-\frac{1}{2}\lambda\frac{\sqrt{T_0\int_{x_t}^{\infty}x^{1-\gamma}dx}}{\sqrt{\int_{y_t}y^{1-\gamma}dy}}y_t^{1-\beta}\\
  &=y_t^{1-\beta}(a-\frac{1}{2}(\int_{x_t}^{\infty}\phi'(\frac{\sqrt{T_0\int_{y_t}^{\infty}y^{1-\beta}dy}}{\sqrt{\int_{x_t}^{\infty}x^{1-\gamma}dx}}x)x^{1-\gamma}dx+\lambda\int_{x_t}^{\infty}x^{1-\gamma}dx)\frac{\sqrt{\bar Y}}{\sqrt{\int_{y_t}^{\infty}y^{1-\beta}dy}})\\
  & = y_t^{1-\beta}(a-\frac{1}{2}(\int_{x_0}^{\infty}\phi'(x\sqrt{\bar Y\int_{y_t}^{\infty}y^{1-\beta}dy} )x^{1-\gamma}dx+\lambda \bar X)\frac{\sqrt{\bar Y}}{\sqrt{\int_{y_t}^{\infty}y^{1-\beta}dy}})
  \end{aligned}
  \end{equation}

  Based on the same discussion in the proof of Theorem \ref{thm:c>0, a>0}, $$h(y_t)=\frac{1}{2}(\int_{x_0}^{\infty}\phi'(x\sqrt{\bar Y\int_{y_t}^{\infty}y^{1-\beta}dy} )x^{1-\gamma}dx+\lambda \bar X)\frac{\sqrt{\bar Y}}{\sqrt{\int_{y_t}^{\infty}y^{1-\beta}dy}})$$ 
  
  is an increasing function of $y_t$. Thus,

  \begin{itemize}
    \item If $a\leq h(y_0)$, then $\frac{\partial S(x_t, y_t)}{\partial y_t}\leq 0$. Thus the optimal $y_t$ is $y_0$.

    In this case, the optimal pricing strategy is
    \begin{equation*}
        \left\{
        \begin{array}{c}
          \widehat{b}_{opt} \leq \frac{\lambda}{\gamma-2}x_0^{2-\gamma}-a \\
          \widehat{c}_{opt} \leq \phi(\frac{1}{\beta-2}y_0^{2-\beta}x_0)
        \end{array}\right.
      \end{equation*}
    \item If $a > h(y_0)$, then there exists a unique solution $\widehat{y}$ for $h(y_t)-a=0$. Then the socially optimal pricing is

    \begin{equation*}
        \left\{
        \begin{array}{c}
          \widehat{b}_{opt} = (\frac{2\lambda}{\frac{1}{\bar X}\int_{x_0}^{\infty}\phi'(\sqrt{\bar Y\frac{1}{\beta-2}\widehat{y}^{2-\beta}}x)x^{1-\gamma}dx+\lambda}-1)a \\
          \widehat{c}_{opt} \leq \phi(\sqrt{\bar Y\frac{1}{\beta-2}\widehat{y}^{2-\beta}}x_0)
        \end{array}\right.
      \end{equation*}
  \end{itemize}

\end{proof}


\subsection{Comparison of $\widehat{y}$ and $y^*$}

We would like to know the relationship between the socially optimum cut off point for CPs $\widehat{y}$ and its revenue maximizing counterpart $y^*$. This relationship depends on $\gamma, \beta, $ and $\phi$. When $\phi$ belongs to a natural class of concave functions --- namely, fractional powers --- such comparison is possible:  Revenue maximization demands that more CPs be cut off than does efficiency, assuming CP costs are not very low.

Let us define two important constants 
$$\zeta = \max\{\frac{1}{2}(\frac{\gamma-2}{\gamma-1}\phi'(\bar Yx_0)+\lambda)\bar X, \frac{1}{2}(\int_{x_0}^{\infty}\phi'(x\bar Y)x^{1-\gamma}dx+\lambda \bar X)\}$$ 

$$\eta = \min\{\frac{1}{2}(\frac{\gamma-2}{\gamma-1}\phi'(\bar Yx_0)+\lambda)\bar X, \frac{1}{2}(\int_{x_0}^{\infty}\phi'(x\bar Y)x^{1-\gamma}dx+\lambda \bar X)\}$$

\begin{theorem}\label{thm:comparison}
  Suppose $a>\zeta$ and $\phi(x)=x^\theta$ where $0<\theta<1$.  Then $\widehat{y}<y^*$.
\end{theorem}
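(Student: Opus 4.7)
The plan is to recast the two implicit equations defining $y^*$ and $\widehat y$ into a common form and then compare them pointwise. Writing $I_y = \int_{y_t}^{\infty}y^{1-\beta}dy$ and $W = W(y_t) = \sqrt{\bar Y\,I_y}$, and using $T_0 = \bar X\bar Y$ together with $\int_{x_0}^{\infty}x^{1-\gamma}dx = \bar X$, the defining equation for $y^*$ in Theorem~\ref{thm:c>0, a>0} becomes $g(y_t)=a$ and the one for $\widehat y$ in Theorem~\ref{thm:sw1} becomes $h(y_t)=a$, where
\begin{align*}
g(y_t) &= \tfrac{1}{2}\Bigl(\tfrac{\gamma-2}{\gamma-1}\phi'(Wx_0)+\lambda\Bigr)\,\frac{\bar X\sqrt{\bar Y}}{\sqrt{I_y}},\\
h(y_t) &= \tfrac{1}{2}\Bigl(\int_{x_0}^{\infty}\phi'(xW)\,x^{1-\gamma}dx+\lambda\bar X\Bigr)\,\frac{\sqrt{\bar Y}}{\sqrt{I_y}}.
\end{align*}
Both $g$ and $h$ are increasing in $y_t$ (this monotonicity is what powered the earlier proofs), and the hypothesis $a>\zeta$ places $y^*$ and $\widehat y$ in the interior $(y_0,\infty)$, so both are pinned down by their implicit equations.

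The key observation is that monotonicity of $h$ reduces the theorem to a pointwise comparison: if one establishes $h(y_t)>g(y_t)$ for every relevant $y_t$, then $h(y^*)>g(y^*)=a=h(\widehat y)$ and a second use of monotonicity forces $\widehat y<y^*$. After cancelling the shared factor $\sqrt{\bar Y}/\sqrt{I_y}$ and the $\lambda\bar X$ term, the pointwise inequality collapses to the single $x$-statement
\begin{equation*}
\int_{x_0}^{\infty}\phi'(xW)\,x^{1-\gamma}dx \;>\; \tfrac{\gamma-2}{\gamma-1}\,\phi'(Wx_0)\,\bar X,
\end{equation*}
with no dependence on the CP side left in it.

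To close this inequality I would exploit the model's built-in identities $\bar X=x_0^{2-\gamma}/(\gamma-2)$ and $x_0^{\gamma-1}=1/(\gamma-1)$, which compress the right-hand side to $x_0\,\phi'(Wx_0)$, i.e.\ the value of $x\phi'(xW)$ at the lower endpoint. The left-hand side is then recognisable as the expectation of $X\phi'(XW)$ under the consumer power-law probability density $x^{-\gamma}$ on $[x_0,\infty)$. Specialising to $\phi(x)=x^\theta$ with $0<\theta<1$ gives $x\phi'(xW)=\theta W^{\theta-1}x^\theta$, which is strictly increasing in $x$, so its expectation strictly exceeds its minimum $x_0\phi'(Wx_0)$. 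A direct evaluation using $\int_{x_0}^{\infty}x^{\theta-\gamma}dx=x_0^{\theta+1-\gamma}/(\gamma-1-\theta)$ (convergent because $\gamma>2$ and $\theta<1$) would produce the explicit ratio $(\gamma-1)/(\gamma-1-\theta)>1$, proving the inequality with room to spare.

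The main obstacle is conceptual rather than computational: one has to spot that, after the substitution $W=\sqrt{\bar Y\,I_y}$, the two threshold functions $g$ and $h$ share the factor $\sqrt{\bar Y}/\sqrt{I_y}$, so that their pointwise comparison reduces to an inequality involving only $\phi'$ on the consumer side. It is worth flagging that for a \emph{general} concave $\phi$ the inequality $h>g$ can fail, because $\phi'(xW)$ may decay too quickly for the cushion $\frac{\gamma-2}{\gamma-1}$ to absorb; the scale invariance of $x^\theta$ is exactly what yields the clean ratio $(\gamma-1)/(\gamma-1-\theta)$, and this is why the hypothesis restricts $\phi$ to fractional powers.
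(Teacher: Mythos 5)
Your proposal is correct and follows essentially the same route as the paper: both arguments reduce the claim to the pointwise inequality $h(y_t)>g(y_t)$, which after cancelling the common factor $\sqrt{\bar Y}/\sqrt{\int_{y_t}^{\infty}y^{1-\beta}dy}$ and the $\lambda$ term becomes $\int_{x_0}^{\infty}\phi'(Zx)x^{1-\gamma}dx>\frac{\gamma-2}{\gamma-1}\phi'(Zx_0)\bar X$, verified for $\phi(x)=x^\theta$ by the same explicit computation (your ratio $(\gamma-1)/(\gamma-1-\theta)$ is exactly the paper's factor $\frac{1}{\gamma-\theta-1}-\frac{1}{\gamma-1}>0$), and then both invoke monotonicity of $g$ and $h$ to conclude $\widehat y<y^*$. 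Your reading of the left side as an expectation of the increasing function $x\phi'(xW)$ under the normalized density $x^{-\gamma}$ is a pleasant gloss, but not a different proof.
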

\begin{proof}
For $y^*$,

\begin{equation}\label{eq:optimal_renvenue}
\begin{aligned}
  a&=\frac{1}{2}(\frac{\gamma-2}{\gamma-1}\phi'(\frac{\sqrt{T_0}\sqrt{\int_{y^*}^{\infty}y^{1-\beta}dy}}{\sqrt{\int_{x_0}^{\infty}x^{1-\gamma}dx}}x_0)+\lambda)\frac{\sqrt{T_0}\sqrt{\int_{x_0}^{\infty}x^{1-\gamma}dx}}{\sqrt{\int_{y^*}^{\infty}y^{1-\beta}dy}}\\
  &=\frac{1}{2}(\frac{\gamma-2}{\gamma-1}\phi'(\sqrt{\bar Y\int_{y^*}^{\infty}y^{1-\beta}dy}x_0)+\lambda)\frac{\bar X\sqrt{\bar Y}}{\sqrt{\int_{y^*}^{\infty}y^{1-\beta}dy}}=g(y^*)
\end{aligned}
\end{equation}

For $\widehat{y}$,

\begin{equation}\label{eq:optimal_social_welfare}
a=\frac{1}{2}(\int_{x_0}^{\infty}\phi'(x\sqrt{\bar Y\int_{\widehat{y}}^{\infty}y^{1-\beta}dy})x^{1-\gamma}dx+\lambda \bar X)\frac{\sqrt{\bar Y}}{\sqrt{\int_{\widehat{y}}^{\infty}y^{1-\beta}dy}})=h(\widehat{y})
\end{equation}

Suppose $\widehat{y}=y^*=y'$, then

\begin{equation}
\begin{aligned}
&h(y')-g(y')\\
&=\frac{1}{2}\frac{\sqrt{\bar Y}}{\sqrt{\int_{y'}^{\infty}y^{1-\beta}dy}}y'(\int_{x_0}^{\infty}\phi'(x\sqrt{\bar Y\int_{y'}^{\infty}y^{1-\beta}dy})x^{1-\gamma}dx-\frac{\gamma-2}{\gamma-1}\phi'(x_0\sqrt{\bar Y\int_{y'}^{\infty}y^{1-\beta}dy})\bar X)\\
\end{aligned}
\end{equation}

Let $\sqrt{\bar Y\int_{y'}^{\infty}y^{1-\beta}dy}=Z$, then

\begin{equation}
\begin{aligned}
&\int_{x_0}^{\infty}\phi'(Zx)x^{1-\gamma}dx-\frac{\gamma-2}{\gamma-1}\phi'(Zx_0)\bar X\\
& = \int_{x_0}^{\infty}\theta Z^{\theta-1}x^{\theta-\gamma}dx-\frac{\gamma-2}{\gamma-1}\theta(Zx_0)^{\theta-1}\frac{1}{\gamma-2}x_0^{2-\gamma}\\
& = \theta Z^{\theta-1}x_0^{1+\theta-\gamma}(\frac{1}{\gamma-\theta-1}-\frac{1}{\gamma-1})>0
\end{aligned}
\end{equation}

Thus, if $\widehat{y}=y^*$, $h(\widehat{y})-g(y^*)>0$. Since $g$ and $h$ are both increasing functions, then $\widehat{y}<y^*$ if $h(\widehat{y})=g(y^*)=a$.
\end{proof}

\subsection{Welfare Comparison}
To summarize our results so far:
\begin{itemize}
  \item In both revenue and welfare maximization, no consumers are left outside the market.
    \item When CP costs are small 
($a\leq \eta$), then no CPs are cut off either.
 \item When $\eta<a\leq\zeta$,  then no CPs are cut off for social optimality, however, some CPs will be cut off for revenue optimality.\footnote{This is because $\frac{1}{2}(\frac{\gamma-2}{\gamma-1}\phi'(\bar Yx_0)+\lambda)\bar X < \frac{1}{2}(\int_{x_0}^{\infty}\phi'(x\bar Y)x^{1-\gamma}dx+\lambda \bar X)$ when $\phi$ is a fractional power function.}
  \item But otherwise, some CPs must be cut off for efficiency (that is, network neutrality is socially suboptimal), while more will have to be cut off for revenue optimality.
\footnote{Perhaps what is most striking in this figure (especially to somebody trained in approximation algorithms and the price of anarchy) is that, in all three cases and for these parameters and model, neither of the two extreme regimes (revenue maximization and net neutrality) is catastrophically suboptimal in social welfare.}

\end{itemize}

But the question now arises, how does the social welfare of net neutrality compare with that of revenue optimality?  Simulations show that the answer depends on CP costs, that is to say, $a$.
In the simulation $\gamma=\beta=2.5$ and $\lambda = 0.1$ where $\phi(x)=x^{1/2}$.   Fig.~\ref{fig:sw} shows the social welfare curve for three different values of $a$:  $1.1\zeta, 1.5\zeta, 2\zeta$.   When $a=1.1\zeta$ (that is, close to the neutrality region) net neutrality has better social welfare than  revenue, while when $a=1.5\zeta, 2\zeta$ the social welfare in revenue optimum case is quite a bit larger than the social welfare in net neutrality.  In fact, we can show that there is a single transition in this regard (proof in the Appendix B):
\begin{theorem}
If $\phi(x)=x^\theta (0<\theta<1)$, there exists a unique $\bar{a}$ such that when $a< \bar a$ net neutrality has better welfare than revenue maximization, while the opposite happens when $a > \bar a$.
\end{theorem}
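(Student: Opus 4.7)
The plan is to reduce social welfare to a one-dimensional strictly concave function of a cutoff variable, identify explicitly the unique value of $a$ at which the two regimes induce the same cutoff (and hence the same welfare), and show that the welfare inequality must flip there by exploiting the sign of $u_{NN}-u^*$ on either side together with the comparison $u^*<\hat u$ from Theorem \ref{thm:comparison}.

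By the argument in the proof of Theorem \ref{thm:sw1} the optimal consumer cutoff is $x_t=x_0$ under both regimes. Setting $u=\sqrt{\int_{y_t}^\infty y^{1-\beta}\,dy}$ and substituting $\phi(z)=z^\theta$ into Eq.~(\ref{eq:social_welfare}), the social welfare becomes the strictly concave
\begin{equation*}
S(u) \;=\; K_1 u^\theta + K_2 u - a u^2,
\end{equation*}
with $K_1=(T_0/\bar X)^{\theta/2}\!\int_{x_0}^\infty\! x^{\theta-\gamma}\,dx>0$ and $K_2=\lambda\sqrt{T_0\bar X}>0$ independent of $a$; let $\hat u(a)\in(0,\sqrt{\bar Y}]$ be its constrained maximizer. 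Under net neutrality the marginal-CP zero-surplus condition with $b=0$ forces $u_{NN}(a)=K_2/a$ (clamped at $\sqrt{\bar Y}$), and substituting back yields the striking identity $W_{NN}(a)=K_1 K_2^\theta a^{-\theta}$ in the unclamped range, since the $K_2 u$ and $au^2$ terms cancel exactly. Under revenue maximization, Theorem \ref{thm:c>0, a>0} identifies $u^*(a)$ as the unique positive solution of $a u^* = K_2/2 + K_3(u^*)^{\theta-1}$ with an explicit $K_3$ satisfying $0<K_3<K_1\theta/2$, and substituting back gives $W_{RM}(a)=(K_1-K_3)(u^*(a))^\theta+\tfrac{K_2}{2}u^*(a)$.

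Since $S$ depends only on $u$, any value of $a$ at which $u_{NN}(a)=u^*(a)$ automatically satisfies $F(a):=W_{NN}(a)-W_{RM}(a)=0$. Substituting $u_{NN}=K_2/a$ into the RM equation yields the single algebraic equation $K_2/2=K_3(K_2/a)^{\theta-1}$, with unique positive solution
\begin{equation*}
\bar{a} \;:=\; K_2\bigl(K_2/(2K_3)\bigr)^{1/(1-\theta)}.
\end{equation*}
Comparing the monotone-in-$u$ functions $a_{NN}(u)=K_2/u$ and $a_{RM}(u)=K_2/(2u)+K_3 u^{\theta-2}$ shows $u_{NN}(a)<u^*(a)$ for $a>\bar{a}$ and $u_{NN}(a)>u^*(a)$ for $a<\bar{a}$. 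Combined with $u^*(a)<\hat u(a)$ (Theorem \ref{thm:comparison}), on $a>\bar{a}$ we have $u_{NN}<u^*<\hat u$, so strict monotonicity of $S$ on $[0,\hat u]$ yields $F(a)<0$ strictly; this settles that side.

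The main obstacle is showing $F(a)>0$ strictly for $a<\bar{a}$. On the sub-range where $u^*(a)<u_{NN}(a)\le\hat u(a)$, monotonicity of $S$ on $[0,\hat u]$ again gives $F>0$ directly. The delicate sub-range is where net neutrality overshoots the socially optimal cutoff, i.e., $u^*<\hat u<u_{NN}$, so that $S(u_{NN})$ and $S(u^*)$ live on opposite sides of the peak and simple monotonicity fails. My plan there is to use the closed forms $W_{NN}(a)=K_1 K_2^\theta a^{-\theta}$ (or the affine expression $W_{NN}(a)=K_1\bar Y^{\theta/2}+K_2\sqrt{\bar Y}-a\bar Y$ in the clamped sub-range) and $W_{RM}(a)=(K_1-K_3)(u^*(a))^\theta+\tfrac{K_2}{2}u^*(a)$ to express $F(a)$ explicitly as a function of $u^*(a)$, and reduce positivity to an algebraic inequality driven by $K_1>K_3$---equivalently, $1/(\gamma-\theta-1)>1/(\gamma-1)$, the same structural inequality behind Theorem \ref{thm:comparison}. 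A secondary technical task is tracking the clamping thresholds ($a=\lambda\bar X$ for $u_{NN}$, $a=\eta$ for $u^*$, $a=\zeta$ for $\hat u$) and patching the piecewise formulas across them; the asymptotics $W_{NN}(a)\sim a^{-\theta}$ versus $W_{RM}(a)\sim a^{-\theta/(2-\theta)}$ (following from $u^*(a)\sim(K_3/a)^{1/(2-\theta)}$ as $a\to\infty$, with $\theta/(2-\theta)<\theta$ since $\theta<1$) provides an independent cross-check that $F<0$ for large $a$.
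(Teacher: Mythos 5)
Your reduction of the welfare to $S(u)=K_1u^{\theta}+K_2u-au^{2}$ with $u=\sqrt{\smash[b]{\int_{y_t}^{\infty}y^{1-\beta}dy}}$ is correct and clarifying, and your constants check out ($K_3<K_1\theta/2$ is the same inequality $\frac{1}{\gamma-\theta-1}>\frac{1}{\gamma-1}$ that drives Theorem~\ref{thm:comparison}). But there are two genuine problems. First, you are comparing against a different net-neutrality benchmark than the one the theorem is about. The paper's welfare-comparison section and its proof of this theorem take ``net neutrality'' to mean $y_t=y_0$, i.e.\ \emph{all} CPs participate, so its $S_0(a)=K_1\bar Y^{\theta/2}+K_2\sqrt{\bar Y}-a\bar Y$ is affine and tends to $-\infty$; the large-$a$ dominance of revenue maximization comes from the uncompensated cost term $-a\bar Y$ (the paper argues: $S^*(a)$ is decreasing because $g(y^*)<h(y^*)$ forces $\partial S/\partial y^*<0$ at the revenue optimum, $S_0>S^*$ at $a=\zeta$, and $S^*-S_0=a(\bar Y-(u^*)^2)-\mathbf{neg}>0$ for large $a$). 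You instead let CPs exit voluntarily under $b=0$, obtaining $u_{NN}=\min(K_2/a,\sqrt{\bar Y})$ and $W_{NN}\sim a^{-\theta}$. That is an economically defensible equilibrium notion, but it changes both the value of $\bar a$ and the mechanism of the result, so you are proving a statement about a different quantity than the paper's $S_0(a)$.

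Second, even within your own framework the proof is incomplete exactly where the difficulty lies. The side $a>\bar a$ is fine ($u_{NN}<u^*<\hat u$ plus strict monotonicity of $S$ on $[0,\hat u]$), but positivity of $F$ on the ``delicate sub-range'' $u^*<\hat u<u_{NN}$ --- where $S(u_{NN})$ and $S(u^*)$ sit on opposite sides of the concave peak, so $F$ could in principle vanish without $u_{NN}=u^*$ --- is only announced as a plan; that is precisely the range in which uniqueness of the sign change must be settled. Relatedly, the clamping bookkeeping is not secondary: your $\bar a=K_2(K_2/(2K_3))^{1/(1-\theta)}$ is the intersection of $a_{NN}(u)=K_2/u$ and $a_{RM}(u)=K_2/(2u)+K_3u^{\theta-2}$ at $u_c=(2K_3/K_2)^{1/(1-\theta)}$, and nothing rules out $u_c\ge\sqrt{\bar Y}$, in which case the two curves never meet at an admissible $u$, the proposed $\bar a$ is spurious, and the sign change must be located inside a clamped regime by a different argument. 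Until the delicate sub-range and these threshold cases are carried out, existence and uniqueness of the transition remain unproven.
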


\begin{figure}
\centerline{\includegraphics[width=1\textwidth]{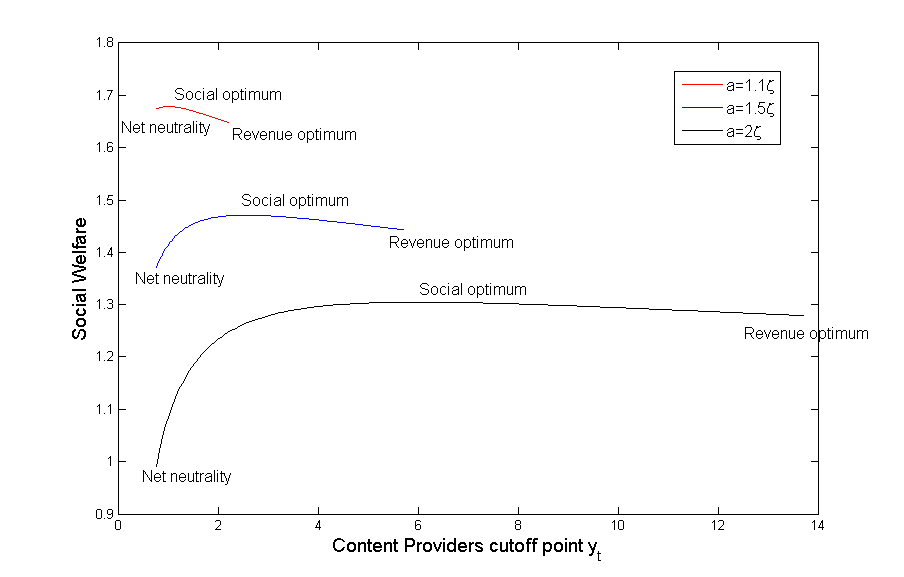}}
\caption{Social Welfare Curve}
\label{fig:sw}
\end{figure}

\section{Conclusion and Further Work}
We have presented a parsimonious model of the Internet as a two-sided market with power-law distributed types from the two sides, with a simple cost structure for CPs, and utilities for the two sides based on simple and natural assumptions.
\begin{itemize}
\item  Net neutrality is socially optimum only when CP productivity is very high.  For lower levels of CP productivity (larger $a$), net neutrality is better than ISP revenue maximization, but net neutrality is worse than ISP revenue maximization for even lower values.  The preeminence of CP productivity as the determining factor of the optimum regulatory regime is one interesting insight from our model.
\end{itemize}

There are many possible extensions that seem very interesting, and some of them appear to be within reach.
\begin{enumerate}

\item  Here we have adopted the ``no payment'' interpretation of net neutrality.  What about the ``non-differentiation'' point of view?  We have interesting preliminary results of this sort (see the Appendix C).  Assume that part of the bandwidth is set aside for paying CPs.  The point is that the payment counter-incentive will increase speed in this ``channel'' (this is the {\em Paris metro pricing} idea \cite{Odl}).  How large part of the total bandwidth should be so allocated, and how should it be priced?   In the appendix we answer these questions, analytically and in more detail computationally, for the case $a=0$  --- that is, zero costs for CPs.  The general $a$ case seems harder, but it would be interesting to crack it.

\item How could we make our model more realistic, without sacrificing much of its simplicity?  We have tried other forms of CP costs and charges (for example, constant instead of linear in $y$) without seeing qualitatively different results.  But how about changing the utility model?  One alternative model would weigh CP revenue by the type of the users it attracts.  Another would use more elaborate and realistic speed functions, for example from queueing theory.

\item  We have not considered {\em subsidies} of CPs by the ISP (negative $b$; note that subsidies are common in two-way markets).  Would they ever improve social welfare, or even ISP revenue?

\item A common argument against net neutrality is that it does not incentivize ISPs to invest in network technology. What can our model tell us about this?  In our calculations we have used, for simplicity that the bandwidth $B$ is one.  We suspect that re-introducing $B$ into our formulas might reveal interesting insights about incentives of the ISP to invest.

\item We have assumed a monopolist ISP; how would ISP competition affect the market?  We suspect that many ISPs competing for consumers under revenue maximization would result in $c=0$, and would charge CPs in near-identical ways, because each of them will be ``selling" to the CPs a different lot (in expectation of the same size) of the same product: the consumers who (randomly) chose this ISP.  Hence we suspect that our results summarized in Theorem 1 come close to obtaining yet another interesting comparison point, telling us how the ISP's monopoly is affecting the efficiency of the market.

\end{enumerate}

\section*{Acknowledgement}

Research results reported in this work are partially supported by the National Natural Science Foundation of China (Grant No. 11426026, 61632017, 61173011) and a Project 985 grant of Shanghai Jiao Tong University. This work was done while the authors was visiting the Simons Institute for the Theory of Computing, Berkeley.

\newpage
\appendix

\section*{Appendix}

\section{Proofs of Theorems 1 and 2}\label{appendix:a}

{\bf Theorem 1.} \emph{If $c=0$, the optimal pricing strategy is}
$$
b_{opt}=\left\{
\begin{array}{ccc}
a &  & a> \frac{\lambda}{2}\frac{x_0^{2-\gamma}}{\gamma-2}\\
\frac{\lambda}{\gamma-2}x_0^{2-\gamma}-a& & 0\leq a \leq \frac{\lambda}{2}\frac{x_0^{2-\gamma}}{\gamma-2}
\end{array}\right.$$

\begin{proof}
	Since $c=0$, $x_t=x_0$, and the revenue of the ISP is ${\cal R}=b\int_{y_t}^{\infty}y^{1-\beta}dy$. Since the ISP would like to maximize revenue, $by_t$ should be the threshold payment for content providers. In other words,
	
	\begin{equation}\label{eq:payment}
	b=\lambda\int_{x_0}^{\infty}\frac{T_0}{T}x^{1-\gamma}dx - a
	\end{equation}
	
	which means the utility of the content provider with $y_t$ quality is 0.
	
	Therefore, the revenue of the platform is only a function of $y_t$. We rewrite it as
	$$R(y_t)=(\lambda\int_{x_0}^{\infty}\frac{T_0}{T}x^{1-\gamma} dx - a)\int_{y_t}^{\infty}y^{1-\beta}dy$$
	
	We plug $T=\sqrt{T_0\int_{x_0}^{\infty}x^{1-\gamma}dx\int_{y_t}^{\infty}y^{1-\beta}dy}$ into the above equation,
	
	\begin{equation}\label{eq:revenue_no membership fee}
	\begin{aligned}
	R(y_t) & = (\lambda \int_{x_0}^{\infty}\frac{\sqrt{T_0}}{\sqrt{\int_{x_0}^{\infty}x^{1-\gamma}dx\int_{y_t}^{\infty}y^{1-\beta}dy}}x^{1-\gamma} dx - a)\int_{y_t}^{\infty}y^{1-\beta}dy\\
	& = (\lambda \frac{\sqrt{T_0\int_{x_0}^{\infty}x^{1-\gamma}dx}}{\sqrt{\int_{y_t}^{\infty}y^{1-\beta}dy}} - a)\int_{y_t}^{\infty}y^{1-\beta}dy\\
	& = \lambda \sqrt{T_0}\sqrt{\int_{x_0}^{\infty}x^{1-\gamma}dx}\sqrt{\int_{y_t}^{\infty}y^{1-\beta}dy} - a\int_{y_t}^{\infty}y^{1-\beta}dy
	\end{aligned}
	\end{equation}
	
	Thus,
	\begin{equation*}
	\begin{aligned}
	\frac{d R(y_t)}{d y_t}&=\lambda \sqrt{T_0}\sqrt{\int_{x_0}^{\infty}x^{1-\gamma}dx} \frac{-\frac{1}{2}y_t^{1-\beta}}{\sqrt{\int_{y_t}^{\infty}y^{1-\beta}dy}} + ay_t^{1-\beta}\\
	& = y_t^{1-\beta}(a-\frac{\lambda}{2} \sqrt{T_0}\sqrt{\int_{x_0}^{\infty}x^{1-\gamma}dx} \frac{1}{\sqrt{\int_{y_t}^{\infty}y^{1-\beta}dy}})
	\end{aligned}
	\end{equation*}
	
	Obviously, $\frac{1}{\sqrt{\int_{y_t}^{\infty}y^{1-\beta}dy}}$ is an increasing function with respect to $y_t$. Therefore, we have the following claims:
	\begin{itemize}
		\item If $a \leq \frac{\lambda}{2}\sqrt{T_0}\sqrt{\int_{x_0}^{\infty}x^{1-\gamma}dx} \frac{1}{\sqrt{\int_{y_0}^{\infty}y^{1-\beta}dy}}=\frac{\lambda}{2}\frac{x_0^{2-\gamma}}{\gamma-2}$, $\frac{d R(y_t)}{d y_t}\leq 0$. Then the $y_t=y_0$ based on the optimal pricing strategy. Based on simple calculation, $b_{opt}=\frac{\lambda}{\gamma-2}x_0^{2-\gamma}-a$.
		\item If $a >\frac{\lambda}{2}\frac{x_0^{2-\gamma}}{\gamma-2}$,
		we let $\frac{d R(y_t)}{d y_t}=0$, which implies
		\begin{equation}\label{eq:optimal_payment}
		a = \frac{\lambda}{2} \sqrt{T_0}\sqrt{\int_{x_0}^{\infty}x^{1-\gamma}dx} \frac{1}{\sqrt{\int_{y_t}^{\infty}y^{1-\beta}dy}}
		\end{equation}
		
		Based on Eq.\ref{eq:payment} and Eq.\ref{eq:optimal_payment}, the optimal payment is
		
		$$b_{opt} = a$$
		
	\end{itemize}
\end{proof}

\noindent{\bf Theorem 2. } \emph{If $c>0, a\geq0$ and $\phi(\cdot)$ is a positive increasing concave function,  the optimal pricing strategy is}

\begin{itemize}
	\item \emph{If $a\leq \frac{1}{2}(\frac{\gamma-2}{\gamma-1}\phi'(\bar Yx_0)+\lambda)\bar X$,}
	\begin{equation*}
	\left\{
	\begin{array}{c}
	b_{opt} = \frac{\lambda}{\gamma-2}x_0^{2-\gamma} -a \\
	c_{opt} = \phi(\frac{1}{\beta-2}y_0^{2-\beta}x_0)
	\end{array}\right.
	\end{equation*}
	\item \emph{If $a> \frac{1}{2}(\frac{\gamma-2}{\gamma-1}\phi'(\bar Yx_0)+\lambda)\bar X$,}
	
	\begin{equation*}
	\left\{
	\begin{array}{c}
	b_{opt} = (\frac{2\lambda}{\frac{\gamma-1}{\gamma-2}\phi'(\sqrt{\bar Y\frac{1}{\beta-2}y^{* 2-\beta}}x_0)+\lambda}-1)a \\
	c_{opt} = \phi(\sqrt{\bar Y\frac{1}{\beta-2}y^{* 2-\beta}}x_0)
	\end{array}\right.
	\end{equation*}
	\emph{where $y^*$ is the solution of $y_t$ which satisfies the following equation:}
	$$a-\frac{1}{2}(\frac{\gamma-2}{\gamma-1}\phi'(\frac{\sqrt{T_0}\sqrt{\int_{y_t}^{\infty}y^{1-\beta}dy}}{\sqrt{\int_{x_0}^{\infty}x^{1-\gamma}dx}}x_0)+\lambda)\frac{\sqrt{T_0}\sqrt{\int_{x_0}^{\infty}x^{1-\gamma}dx}}{\sqrt{\int_{y_t}^{\infty}y^{1-\beta}dy}}=0$$
\end{itemize}

\begin{proof}
	The revenue of the platform is ${\cal R}=c\int_{x_t}^{\infty}x^{-\gamma}dx + b\int_{y_t}^{\infty}y^{1-\beta}dy$. To maximize the revenue, the platform organizer would set $c$ and $b$ as the threshold fee for users and content providers. Specifically,
	
	\begin{equation}\label{eq:threshold_fee}
	\begin{aligned}
	c = \phi(\int_{y_t}^{\infty}\frac{T_0}{T}y^{1-\beta}x_t dy)\\
	b = \lambda\int_{x_t}^{\infty}\frac{T_0}{T}x^{1-\gamma} dx -a\\
	\end{aligned}
	\end{equation}
	
	where the utility of the content provider with $y_t$ quality is 0 and the utility of the user with $x_t$ quality is 0. Thus, the revenue of the platform is the function of $x_t, y_t$ by plugging $T$ into.
	
	\begin{equation}
	\begin{aligned}
	R(x_t, y_t)&=\phi(\int_{y_t}^{\infty}\frac{T_0}{T}y^{1-\beta}x_t dy)\int_{x_t}^{\infty}x^{-\gamma}dx+(\lambda\int_{x_t}^{\infty}\frac{T_0}{T}x^{1-\gamma} dx -a)\int_{y_t}^{\infty}y^{1-\beta}dy\\
	& =\phi(\int_{y_t}^{\infty}\frac{\sqrt{T_0}}{\sqrt{\int_{x_t}^{\infty}x^{1-\gamma}dx\int_{y_t}^{\infty}y^{1-\beta}dy}}y^{1-\beta}x_tdy)\int_{x_t}^{\infty}x^{-\gamma}dx+(\lambda \frac{\sqrt{T_0\int_{x_t}^{\infty}x^{1-\gamma}dx}}{\sqrt{\int_{y_t}^{\infty}y^{1-\beta}dy}} - a)\int_{y_t}^{\infty}y^{1-\beta}dy\\
	& = \phi(\frac{\sqrt{T_0\int_{y_t}^{\infty}y^{1-\beta}dy}}{\sqrt{\int_{x_t}^{\infty}x^{1-\gamma}dx}}x_t)\int_{x_t}^{\infty}x^{-\gamma}dx+(\lambda \frac{\sqrt{T_0\int_{x_t}^{\infty}x^{1-\gamma}dx}}{\sqrt{\int_{y_t}^{\infty}y^{1-\beta}dy}} - a)\int_{y_t}^{\infty}y^{1-\beta}dy\\
	& = \phi(\frac{\sqrt{T_0}\sqrt{\int_{y_t}^{\infty}y^{1-\beta}dy}}{\sqrt{\int_{x_t}^{\infty}x^{1-\gamma}dx}}x_t)\int_{x_t}^{\infty}x^{-\gamma}dx
	+\lambda\sqrt{T_0}\sqrt{\int_{x_t}^{\infty}x^{1-\gamma}dx}\sqrt{\int_{y_t}^{\infty}y^{1-\beta}dy} - a\int_{y_t}^{\infty}y^{1-\beta}dy
	\end{aligned}
	\end{equation}
	
	Thus,
	\begin{equation}\label{eq:partial_x}
	\begin{aligned}
	\frac{\partial R(x_t, y_t)}{\partial x_t}&=\int_{x_t}^{\infty}x^{-\gamma}dx \cdot\phi'(\frac{\sqrt{T_0}\sqrt{\int_{y_t}^{\infty}y^{1-\beta}dy}}{\sqrt{\int_{x_t}^{\infty}x^{1-\gamma}dx}}x_t)\sqrt{T_0}\sqrt{\int_{y_t}^{\infty}y^{1-\beta}dy}\frac{\sqrt{\int_{x_t}^{\infty}x^{1-\gamma}dx}+\frac{1}{2}\frac{x_t^{2-\gamma}}{\sqrt{\int_{x_t}^{\infty}x^{1-\gamma}dx}}}{\int_{x_t}^{\infty}x^{1-\gamma}dx}\\
	&-\phi(\frac{\sqrt{T_0}\sqrt{\int_{y_t}^{\infty}y^{1-\beta}dy}}{\sqrt{\int_{x_t}^{\infty}x^{1-\gamma}dx}}x_t)x_t^{-\gamma}-\frac{\lambda}{2}\frac{\sqrt{T_0}\sqrt{\int_{y_t}^{\infty}y^{1-\beta}dy}}{\sqrt{\int_{x_t}^{\infty}x^{1-\gamma}dx}}x_t^{1-\gamma}\\
	& = \frac{1}{\gamma-1}x_t^{1-\gamma}\cdot\phi'(\frac{\sqrt{T_0}\sqrt{\int_{y_t}^{\infty}y^{1-\beta}dy}}{\sqrt{\int_{x_t}^{\infty}x^{1-\gamma}dx}}x_t)\frac{\sqrt{T_0}\sqrt{\int_{y_t}^{\infty}y^{1-\beta}dy}}{\sqrt{\int_{x_t}^{\infty}x^{1-\gamma}dx}}(1+\frac{\gamma-2}{2})\\
	&-\phi(\frac{\sqrt{T_0}\sqrt{\int_{y_t}^{\infty}y^{1-\beta}dy}}{\sqrt{\int_{x_t}^{\infty}x^{1-\gamma}dx}}x_t)x_t^{-\gamma}-\frac{\lambda}{2}\frac{\sqrt{T_0}\sqrt{\int_{y_t}^{\infty}y^{1-\beta}dy}}{\sqrt{\int_{x_t}^{\infty}x^{1-\gamma}dx}}x_t^{1-\gamma}\\
	&\leq \frac{\gamma x_t^{-\gamma}}{2(\gamma-1)}\phi(\frac{\sqrt{T_0}\sqrt{\int_{y_t}^{\infty}y^{1-\beta}dy}}{\sqrt{\int_{x_t}^{\infty}x^{1-\gamma}dx}}x_t)-\phi(\frac{\sqrt{T_0}\sqrt{\int_{y_t}^{\infty}y^{1-\beta}dy}}{\sqrt{\int_{x_t}^{\infty}x^{1-\gamma}dx}}x_t)x_t^{-\gamma}-\frac{\lambda}{2}\frac{x_t^{1-\gamma}\sqrt{T_0}\sqrt{\int_{y_t}^{\infty}y^{1-\beta}dy}}{\sqrt{\int_{x_t}^{\infty}x^{1-\gamma}dx}}\\
	&= \phi(\frac{\sqrt{T_0}\sqrt{\int_{y_t}^{\infty}y^{1-\beta}dy}}{\sqrt{\int_{x_t}^{\infty}x^{1-\gamma}dx}}x_t)x_t^{-\gamma}(\frac{\gamma}{2(\gamma-1)}-1)-\frac{\lambda}{2}\frac{\sqrt{T_0}\sqrt{\int_{y_t}^{\infty}y^{1-\beta}dy}}{\sqrt{\int_{x_t}^{\infty}x^{1-\gamma}dx}}x_t^{1-\gamma}\\
	&\leq 0
	\end{aligned}
	\end{equation}
	
	The first inequality is based on the following property of $\phi(\cdot)$: for any $x\geq 0, x\phi'(x)\leq \phi(x)$. This is because:
	
	Let $f(x)=x\phi'(x)-\phi(x)$, then $f(0)=-\phi(0)\leq 0$ and $f'(x)=\phi'(x)+x\phi''(x)-\phi'(x)=x\phi''(x)\leq 0$. Thus, $x\phi'(x)-\phi(x)\leq -\phi(0)\leq 0$.
	
	According to the discussion above, we could know $x_t = x_0$ under the optimal pricing strategy. Then we turn to consider the optimal $y_t$ in the following,
	
	\begin{equation}\label{eq:partial_y}
	\begin{aligned}
	\frac{\partial R(x_t, y_t)}{\partial y_t}
	&=-\phi'(\frac{\sqrt{T_0}\sqrt{\int_{y_t}^{\infty}y^{1-\beta}dy}}{\sqrt{\int_{x_t}^{\infty}x^{1-\gamma}dx}}x_t)\frac{\frac{1}{2}\sqrt{T_0}y_t^{1-\beta}}{\sqrt{\int_{x_t}^{\infty}x^{1-\gamma}dx}\sqrt{\int_{y_t}^{\infty}y^{1-\beta}dy}}x_t\int_{x_t}^{\infty}x^{-\gamma}dx\\
	&-\frac{\lambda}{2}\frac{\sqrt{T_0}\sqrt{\int_{x_t}^{\infty}x^{1-\gamma}dx}y_t^{1-\beta}}{\sqrt{\int_{y_t}^{\infty}y^{1-\beta}dy}}+a y_t^{1-\beta}\\
	& = a y_t^{1-\beta}- \frac{1}{2}\frac{\sqrt{T_0}\sqrt{\int_{x_t}^{\infty}x^{1-\gamma}dx}}{\sqrt{\int_{y_t}^{\infty}y^{1-\beta}dy}}y_t^{1-\beta}(\frac{\gamma-2}{\gamma-1}\phi'(\frac{\sqrt{T_0}\sqrt{\int_{y_t}^{\infty}y^{1-\beta}dy}}{\sqrt{\int_{x_t}^{\infty}x^{1-\gamma}dx}}x_t)+\lambda)\\
	& = y_t^{1-\beta}(a-\frac{1}{2}(\frac{\gamma-2}{\gamma-1}\phi'(\frac{\sqrt{T_0}\sqrt{\int_{y_t}^{\infty}y^{1-\beta}dy}}{\sqrt{\int_{x_t}^{\infty}x^{1-\gamma}dx}}x_t)+\lambda)\frac{\sqrt{T_0}\sqrt{\int_{x_t}^{\infty}x^{1-\gamma}dx}}{\sqrt{\int_{y_t}^{\infty}y^{1-\beta}dy}})
	\end{aligned}
	\end{equation}
	
	Based on Eq. \ref{eq:partial_x}, $x_t=x_0$. Therefore, we let
	\begin{equation*}\begin{aligned}
	g(y_t)&=\frac{1}{2}(\frac{\gamma-2}{\gamma-1}\phi'(\frac{\sqrt{T_0}\sqrt{\int_{y_t}^{\infty}y^{1-\beta}dy}}{\sqrt{\int_{x_t}^{\infty}x^{1-\gamma}dx}}x_t)+\lambda)\frac{\sqrt{T_0}\sqrt{\int_{x_t}^{\infty}x^{1-\gamma}dx}}{\sqrt{\int_{y_t}^{\infty}y^{1-\beta}dy}}\\
	& =\frac{1}{2}(\frac{\gamma-2}{\gamma-1}\phi'(\frac{\sqrt{T_0}\sqrt{\int_{y_t}^{\infty}y^{1-\beta}dy}}{\sqrt{\int_{x_0}^{\infty}x^{1-\gamma}dx}}x_0)+\lambda)\frac{\sqrt{T_0}\sqrt{\int_{x_0}^{\infty}x^{1-\gamma}dx}}{\sqrt{\int_{y_t}^{\infty}y^{1-\beta}dy}}\\
	\end{aligned}
	\end{equation*}
	Since $\phi'(\cdot)$ is a decreasing function and $\frac{\sqrt{T_0}\sqrt{\int_{y_t}^{\infty}y^{1-\beta}dy}}{\sqrt{\int_{x_0}^{\infty}x^{1-\gamma}dx}}x_0$ is also a decreasing function of $y_t$, $\phi'(\frac{\sqrt{T_0}\sqrt{\int_{y_t}^{\infty}y^{1-\beta}dy}}{\sqrt{\int_{x_0}^{\infty}x^{1-\gamma}dx}}x_0)$ is an increasing function with respect to $y_t$. Besides, we notice that $\frac{1}{\sqrt{\int_{y_t}^{\infty}y^{1-\beta}dy}}$ is an increasing function of $y_t$. Thus, $g(y_t)$ is an increasing function with respect to $y_t$.
	
	\begin{itemize}
		\item If $a\leq g(y_0)=\frac{1}{2}(\frac{\gamma-2}{\gamma-1}\phi'(\bar Yx_0)+\lambda)\bar X$, then $\frac{\partial R(x_t, y_t)}{\partial y_t}\leq 0$. Then, the optimal $y_t$ is $y_0$. The optimal pricing strategy of the platform is
		\begin{equation*}
		\left\{
		\begin{array}{c}
		b_{opt} = \frac{\lambda}{\gamma-2}x_0^{2-\gamma} - a \\
		c_{opt} = \phi(\frac{1}{\beta-2}y_0^{2-\beta}x_0)
		\end{array}\right.
		\end{equation*}
		
		\item If $a> g(y_0)=\frac{1}{2}(\frac{\gamma-2}{\gamma-1}\phi'(\bar Yx_0)+\lambda)\bar X$, then there exists a unique $y_t$ s.t. $a=g(y_t)$, we denote this solution as $y^*$. The optimal pricing strategy of the platform is
		\begin{equation*}
		\left\{
		\begin{array}{c}
		b_{opt} = (\frac{2\lambda}{\frac{\gamma-2}{\gamma-1}\phi'(\sqrt{\bar Y\frac{1}{\beta-2}y^{* 2-\beta}}x_0)+\lambda}-1)a \\
		c_{opt} = \phi(\sqrt{\bar Y\frac{1}{\beta-2}y^{* 2-\beta}}x_0)
		\end{array}\right.
		\end{equation*}
		
	\end{itemize}
	
\end{proof}

\section{Proof of Theorem 5}
{\bf Theorem 5. }
\emph{If $\phi(x)=x^\theta (0<\theta<1)$, there exists a unique $\bar{a}$ such that when $a< \bar a$ net neutrality has better welfare than revenue maximization, while the opposite happens when $a \geq \bar a$.}

\begin{proof}
	From Eq \ref{eq:optimal_renvenue} it is easy to check that $a$ is an increasing function of $y^*$ since $\phi'(\cdot)$ is also an increasing function of $y^*$.
	
	Now, in revenue optimization, the social welfare ${\cal S}$ is a function of $y^*$, that is,
	\begin{equation*}
	\begin{aligned}
	S(x_0, y^*)= \int_{x_0}^{\infty}\phi(\frac{\sqrt{T_0\int_{y^*}^{\infty}y^{1-\beta}dy}}{\sqrt{\int_{x_0}^{\infty}x^{1-\gamma}dx}}x)x^{-\gamma}dx+
	\lambda\sqrt{T_0}\sqrt{\int_{x_0}^{\infty}x^{1-\gamma}dx}\sqrt{\int_{y^*}^{\infty}y^{1-\beta}dy}-a\int_{y^*}^{\infty}y^{1-\beta}dy
	\end{aligned}
	\end{equation*}
	
	Following the same computation as Eq \ref{eq:social_welfare_partial_y}, we have
	
	\begin{equation*}
	\begin{aligned}
	\frac{\partial {\cal S}}{\partial y^*} = y^{*1-\beta}(a-\frac{1}{2}(\int_{x_0}^{\infty}\phi'(x\sqrt{\bar Y\int_{y^*}^{\infty}y^{1-\beta}dy} )x^{1-\gamma}dx+\lambda \bar X)\frac{\sqrt{\bar Y}}{\sqrt{\int_{y^*}^{\infty}y^{1-\beta}dy}}))
	\end{aligned}
	\end{equation*}
	
	Based on the discussion in Theorem \ref{thm:comparison}, $a=g(y^*)<h(y^*)$. Therefore, $ \frac{\partial {\cal S}}{\partial y^*}< 0$. Thus, the social welfare ${\cal S}$ is a decreasing function of $y^*$, which implies it is a decreasing function of $a$, denoted by $S^*(a)$.
	
	In the initial $y_0$ case, the social welfare is also a decreasing function of $a$, denoted by $S_0(a)$. When $a=\zeta=\frac{1}{2}(\int_{x_0}^{\infty}\phi'(x\bar Y)x^{1-\gamma}dx+\lambda \bar X)$, $\widehat{y}=y_0<y^*$, therefore, $S_0(a) > S^*(a)$.
	
	When $a$ tends to infinity,
	\begin{equation}
	\begin{aligned}
	S^*(a) - S_0(a)& = a(\int_{y_0}^{\infty}y^{1-\beta}dy-\int_{y^*}^{\infty}y^{1-\beta}dy) - \lambda\sqrt{T_0}\sqrt{\int_{x_0}^{\infty}x^{1-\gamma}dx}(\sqrt{\int_{y_0}^{\infty}y^{1-\beta}dy}-\sqrt{\int_{y^*}^{\infty}y^{1-\beta}dy})\\
	& -\int_{x_0}^{\infty}(\phi(\frac{\sqrt{T_0\int_{y_0}^{\infty}y^{1-\beta}dy}}{\sqrt{\int_{x_0}^{\infty}x^{1-\gamma}dx}}x)-\phi(\frac{\sqrt{T_0\int_{y^*}^{\infty}y^{1-\beta}dy}}{\sqrt{\int_{x_0}^{\infty}x^{1-\gamma}dx}}x))x^{-\gamma}dx\\
	&= a(\int_{y_0}^{\infty}y^{1-\beta}dy-\int_{y^*}^{\infty}y^{1-\beta}dy) - \mathbf{neg} > 0
	\end{aligned}
	\end{equation}
	
	since the negative part of $S^*(a) - S_0(a)$, $\mathbf{neg}$ is bounded. Finally, combining the monotone property of $S_0(a)$ and $S^*(a)$ and the above, we conclude that there exists a unique transition with $S_0(a)=S^*(a)$.
	
\end{proof}

\section{Paris Metro Pricing}
Can we extend our analysis to study net neutrality under its interpretation of ``non-differentiation''?  In this section we analyze a simple model of this sort.  Suppose that the ISP splits the pipe, assumed to be of capacity one, into two channels, the {\em pay channel} of capacity $B_1\leq 1$, and the {\em free channel} with capacity $1-B_1$.  CPs can choose between the two (we assume the decision is binary).  If a CP of type $y$ uses the free chanel it is charged zero, but if it uses the paying channel it is charged $by$.  Consumers have access to content transmitted through both channels.

We can now calculate the speed at the two channels. Thus the speed function of pay channel is
$\frac{B_1 T_0}{T_1}$ where $T_1$ is the traffic of the payhannel, which is measured as follows:

\begin{equation*}\begin{aligned}
T_1&=\int_{x_t}^{\infty}\int_{y_t}^{\infty}\frac{B_1T_0}{T_1}xydp_\gamma(x)dp_\beta(y)=\frac{B_1T_0}{T_1}\int_{x_t}^{\infty}\int_{y_t}^{\infty}x^{1-\gamma}y^{1-\beta}dxdy\\
& = \sqrt{B_1T_0\int_{x_t}^{\infty}x^{1-\gamma}dx\int_{y_t}^{\infty}y^{1-\beta}dy}
\end{aligned}\end{equation*}

where $y_t$ is the threshold type of content providers for inclusion in the paying channel.

We assume that $a=0$, and hence no content providers drop out completely.

Similarly, the traffic of Channel II $T_2$ is denoted as $T_2=\sqrt{(1-B_1)T_0\int_{x_t}^{\infty}x^{1-\gamma}dx\int_{y_0}^{y_t}y^{1-\beta}dy}$

To maximize revenue,
\begin{equation}\label{membershipfee}
c = \phi((\int_{y_0}^{y_t}\frac{(1-B_1)T_0}{T_2}y^{1-\beta}dy+\int_{y_t}^{\infty}\frac{B_1T_0}{T_1}y^{1-\beta}dy)x_t)
\end{equation}

\begin{equation}\label{payment}
\lambda y_t\int_{x_t}^{\infty}\frac{B_1T_0}{T_1}x^{1-\gamma}dx - by_t = \lambda y_t\int_{x_t}^{\infty}\frac{(1-B_1)T_0}{T_2}x^{1-\gamma}dx
\end{equation}

Hence this decision problem is equivalent to the following optimization problem:

\begin{equation}\label{optimization}
\max_{x_t, y_t, B_1} R(x_t, y_t, B_1)=c\int_{x_t}^{\infty}x^{-\gamma}dx+b\int_{y_t}^{\infty}y^{1-\beta}dy
\end{equation}

\begin{equation}\label{conditions}
\begin{aligned}
&s.t. & x_t \geq x_0, y_t\geq y_0\\
&&\frac{B_1T_0}{T_1}\geq \frac{(1-B_1)T_0}{T_2}\\
&& 0\leq B_1\leq 1
\end{aligned}
\end{equation}

where $c$ and $b$ satisfy Eq. \ref{membershipfee}, Eq. \ref{payment}. The figure below shows how the optimal $B_1, x_t, y_t$ depend on $\gamma, \beta, \lambda$.

In all experiments, we assume $\phi(x)=x^{1/2}$ and $\lambda$ is very small (with large $\lambda$ may lead the model will degenerates to one-channel case).  We see that $\gamma$ has a threshold behavior: below the threshold, the two-channel model will degenerate to one-channel case.  Above the threshold, the two-channel model has performance better than the one-channel model.
In addition, we find that if $\lambda$ is small enough and $\gamma$ exceeds the threshold, the optimal $B_1$ is only related to $\beta$ (see Fig. \ref{fig:B1_beta}).

\begin{figure}
	\centerline{\includegraphics[width=.65\textwidth]{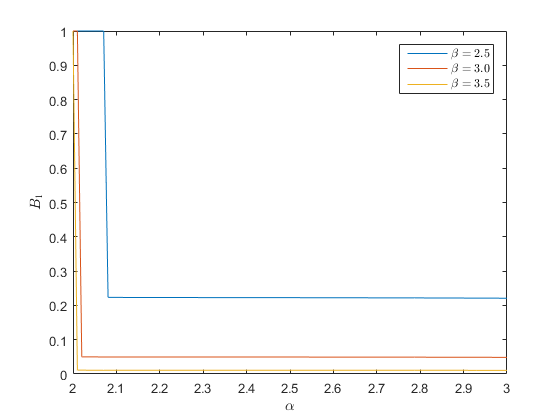}}
	\caption{Optimal $B_1$ based on different $\gamma$s and $\beta$s .}
	\label{fig:B1_alpha}
\end{figure}

\begin{figure}
	\centerline{\includegraphics[width=.65\textwidth]{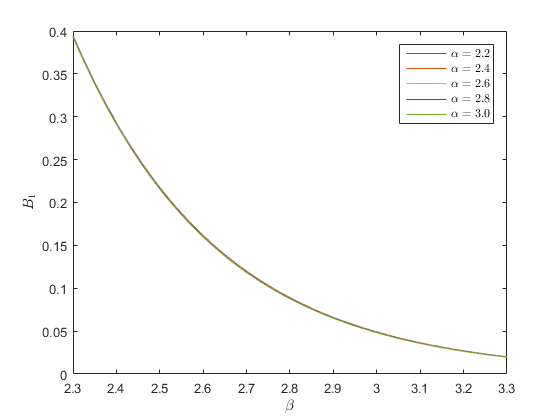}}
	\caption{Optimal $B_1$ based on different $\gamma$s and $\beta$s .}
	\label{fig:B1_beta}
\end{figure}

Finally, $x_t=x_0$ persists in the two-channel model:

\begin{theorem}\label{thm:two-channel}
	If $\phi(\cdot)$ is a increasing concave function and $\forall x\geq 0, \phi(x)\geq 0$, in the two-channel model, the optimal $x_t$ is always be $x_0$ for any $a$.
\end{theorem}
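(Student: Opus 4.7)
The plan is to mimic the argument for $x_t = x_0$ in the proof of Theorem 2, with the necessary modifications to handle the split-channel structure. The ISP will set $c$ and $b$ as threshold fees as specified by Eq.~\ref{membershipfee} and Eq.~\ref{payment}, so the revenue $R(x_t,y_t,B_1)$ becomes a function of only $(x_t,y_t,B_1)$. I fix $y_t$ and $B_1$ and show that the partial derivative $\partial R/\partial x_t \leq 0$ for all feasible $x_t \geq x_0$, which forces the optimum to occur at the boundary $x_t = x_0$.

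First I would substitute the explicit expressions $T_1 = \sqrt{B_1 T_0 A(x_t) I_1(y_t)}$ and $T_2 = \sqrt{(1-B_1) T_0 A(x_t) I_2(y_t)}$, where $A(x_t)=\int_{x_t}^{\infty}x^{1-\gamma}dx$, $I_1(y_t)=\int_{y_t}^{\infty}y^{1-\beta}dy$, and $I_2(y_t)=\int_{y_0}^{y_t}y^{1-\beta}dy$. This yields the compact form
\begin{equation*}
R = \phi(u(x_t))\,B(x_t) + \lambda\,Q\,\sqrt{T_0\,A(x_t)}\,I_1(y_t),
\end{equation*}
where $B(x_t)=\int_{x_t}^{\infty}x^{-\gamma}dx$, $u(x_t) = x_t\,P\,\sqrt{T_0/A(x_t)}$ with $P = \sqrt{(1-B_1)I_2(y_t)} + \sqrt{B_1 I_1(y_t)}$, and $Q = \sqrt{B_1/I_1(y_t)} - \sqrt{(1-B_1)/I_2(y_t)}$. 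The feasibility constraint $B_1T_0/T_1 \geq (1-B_1)T_0/T_2$ is exactly $Q \geq 0$, which is the key observation making the second term behave well.

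Next I would differentiate. Using the identity $A(x_t) = x_t^{2-\gamma}/(\gamma-2)$, a direct computation gives $du/dx_t = (\gamma/2)\cdot u/x_t$. For the first term, the same manipulation as in Eq.~\ref{eq:partial_x} produces
\begin{equation*}
\frac{d}{dx_t}[\phi(u)B(x_t)] = \frac{\gamma}{2(\gamma-1)}\phi'(u)\,u\,x_t^{-\gamma} - \phi(u)x_t^{-\gamma} \;\leq\; -\frac{\gamma-2}{2(\gamma-1)}\phi(u)\,x_t^{-\gamma}\;\leq\;0,
\end{equation*}
where the first inequality uses the concavity lemma $u\phi'(u)\leq \phi(u)$ (which holds because $\phi$ is concave with $\phi(0)\geq 0$, exactly as verified in the proof of Theorem 2). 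For the second term, $\sqrt{A(x_t)}$ is strictly decreasing in $x_t$, $Q \geq 0$ by feasibility, and $\lambda, I_1(y_t) \geq 0$, hence its derivative is also $\leq 0$.

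The main subtlety, compared to Theorem 2, is the presence of the $b\int_{y_t}^{\infty}y^{1-\beta}dy$ term, whose $x_t$-dependence is now non-trivial because $b$ depends on $x_t$ through $A(x_t)$; the argument goes through cleanly only because the non-differentiation constraint forces $Q\geq 0$, so this term decreases in $x_t$ rather than potentially pulling the derivative positive. With both pieces non-positive, $\partial R/\partial x_t \leq 0$ everywhere, and the optimum is attained at $x_t = x_0$, independent of the value of $a$ (which does not even enter the argument under the stated assumption $a=0$ but, as the calculation shows, plays no role in the $x_t$-direction anyway).
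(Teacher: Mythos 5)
Your proposal is correct and follows essentially the same route as the paper's proof: the same substitution of the threshold fees and traffic formulas, the same computation $du/dx_t = (\gamma/2)\,u/x_t$, the same concavity lemma $u\phi'(u)\le\phi(u)$ applied to the $\phi(u)B(x_t)$ term, and the same observation that the feasibility constraint makes the $b$-revenue term non-increasing in $x_t$. If anything, you make explicit (via $Q\ge 0$) a point the paper only uses implicitly when it drops the third term in its bound on $\partial R/\partial x_t$.
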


\begin{proof}
	The revenue function of two-channel platform can be represented as follows:
	\begin{equation}\label{eq:revenue_two-channel}
	\begin{aligned}
	R(x_t, y_t, B_1) &= c\int_{x_t}^{\infty}x^{-\gamma}dx+b\int_{y_t}^{\infty}y^{1-\beta}dy\\
	&=\phi((\int_{y_0}^{y_t}\frac{(1-B_1)T_0}{T_2}y^{1-\beta}dy+\int_{y_t}^{\infty}\frac{B_1T_0}{T_1}y^{1-\beta}dy)x_t)\int_{x_t}^{\infty}x^{-\gamma}dx\\
	&+\lambda\int_{x_t}^{\infty}x^{1-\gamma}dx(\frac{B_1T_0}{T_1}-\frac{(1-B_1)T_0}{T_2})\int_{y_t}^{\infty}y^{1-\beta}dy\\
	& = \phi(\frac{\sqrt{T_0}(\sqrt{\int_{y_t}^{\infty}y^{1-\beta}dy}\sqrt{B_1}+\sqrt{\int_{y_0}^{y_t}y^{1-\beta}dy}\sqrt{1-B_1})}{\sqrt{\int_{x_t}^{\infty}x^{1-\gamma}dx}}x_t)\int_{x_t}^{\infty}x^{-\gamma}dx\\
	& +\lambda\sqrt{\int_{x_t}^{\infty}x^{1-\gamma}dx}\sqrt{T_0}(\frac{\sqrt{B_1}}{\sqrt{\int_{y_t}^{\infty}y^{1-\beta}dy}}-\frac{\sqrt{1-B_1}}{\sqrt{\int_{y_0}^{y_t}y^{1-\beta}dy}})\int_{y_t}^{\infty}y^{1-\beta}dy\\
	\end{aligned}
	\end{equation}

	Then we focus on the decision problem that "Which $x_t$ is optimal?" Firstly, we let
	$$\frac{\sqrt{T_0}(\sqrt{\int_{y_t}^{\infty}y^{1-\beta}dy}\sqrt{B_1}+\sqrt{\int_{y_0}^{y_t}y^{1-\beta}dy}\sqrt{1-B_1})}{\sqrt{\int_{x_t}^{\infty}x^{1-\gamma}dx}}x_t=\eta(x_t,y_t,B_1)$$
	\begin{equation}\label{eq:partial_x_t_two-channel}
	\begin{aligned}
	\frac{\partial R(x_t, y_t, B_1)}{\partial x_t} &= -\phi(\eta)x_t^{-\gamma}+\phi'(\eta)\frac{\partial\eta}{\partial x_t}\int_{x_t}^{\infty}x^{-\gamma}dx\\
	&-\frac{1}{2}\lambda\sqrt{T_0}(\frac{\sqrt{B_1}}{\sqrt{\int_{y_t}^{\infty}y^{1-\beta}dy}}-\frac{\sqrt{1-B_1}}{\sqrt{\int_{y_0}^{y_t}y^{1-\beta}dy}})\int_{y_t}^{\infty}y^{1-\beta}dy\frac{x_t^{1-\gamma}}{\sqrt{\int_{x_t}^{\infty}x^{1-\gamma}dx}}\\
	&\leq -\phi(\eta)x_t^{-\gamma}+\phi'(\eta)(1+\frac{\gamma-2}{2})\frac{1}{\gamma-1}\eta x_t^{-\gamma}\\
	&\leq -\phi(\eta)x_t^{-\gamma}+\frac{\gamma}{2(\gamma-1)}\phi(\eta)x_t^{1-\gamma}\\
	& = (\frac{\gamma}{2(\gamma-1)}-1)\phi(\eta)x_t^{1-\gamma}\leq 0
	\end{aligned}
	\end{equation}
	
	The first inequality is based on the same calculation as Eq.\ref{eq:partial_x} where $\frac{\partial \eta}{\partial x_t}=(1+\frac{\gamma-2}{2})\frac{\eta}{x_t}$ and $\int_{x_t}^{\infty}x^{-\gamma}dx=\frac{1}{\gamma-1}x_t^{1-\gamma}$. The second inequality above is based on the fact that $\forall x\geq 0, x\phi'(x)\leq \phi(x)$.
	
	Therefore, the optimal $x_t$ for two-channel platform is always be $x_0$.
\end{proof}

\end{document}